\newtheorem{fait} {Fact}
\newtheorem{lem}  {Lemma}
\newtheorem{prop} {Proposition}
\newtheorem{rgl}  {Rule}
\newtheorem{defi} {Definition}
\newtheorem{claimN} {Claim}
\newcommand{\rlem}   [1] {Lemma~\ref{#1}\xspace}
\newcommand{\rrgl}   [1] {Rule~\ref{#1}\xspace}
\newcommand{\domrb}  [0] {\textsc{Red-Blue Dominating Set}\xspace}
\newcommand{\dom}    [0] {\textsc{Dominating Set}\xspace}
\newcommand{\drb}    [0] {rbds\xspace}
\newcommand{\rbds}{rbds\xspace}
\newcommand{\RBDS}{\textsc{RBDS}\xspace}
\newcommand{\YES}{\textsc{Yes}}
\newcommand{\NP}{\ensuremath{\text{NP}}\xspace}
\newtheorem{theorem}{Theorem}%[section]
\newenvironment{proof}{\noindent \textit{Proof. }}{\hfill$\square$\vspace{.2cm}}
\newenvironment{proofof}{\noindent \textit{Proof of Theorem~\ref{th:main}. }}{\hfill$\square$\vspace{.2cm}}
  \renewenvironment{thebibliography}[1]{%
    \begin{oldthebibliography}{#1}%
      \setlength{\parskip}{0.4ex}%
      \setlength{\itemsep}{0.4ex}%
  }%
  {%
    \end{oldthebibliography}%
  }
\newcommand{\add}   [1] {\textcolor{red} {#1}}
\newcommand{\addOK}   [1] {\textcolor{black} {#1}}
\newcommand{\modif} [2] {\add{#2}}
\newcommand{\modifOK} [2] {\addOK{#2}}
\newcommand{\ig}[1]{\textcolor{blue}{\fbox{\fbox{\textcolor{blue}{#1}}}}}
\title{A Linear Kernel for Planar Red-Blue Dominating Set\thanks{A preliminary short version of this work appeared in the\emph{ Proceedings of the 12th Cologne-Twente Workshop on Graphs and Combinatorial Optimization, pages 117-120, Enschede, Netherlands, May 21-23,  2013}.}}
\author{Valentin Garnero\footnote{Universit\'e de Montpellier,  LIRMM, Montpellier, France.
        E-mail: \texttt{\small{valentin.garnero@lirmm.fr}}.}$\
        $,$\  $
        Ignasi Sau\footnote{CNRS, LIRMM--Universit\'e de Montpellier, Montpellier, France.
        E-mail: \texttt{\small{ignasi.sau@lirmm.fr}}.}$\
        $, and 
        Dimitrios M. Thilikos\footnote{CNRS, LIRMM, Montpellier, France. Department of Mathematics, National \& Kapodistrian University of Athens, Greece.
        E-mail: \texttt{\small{sedthilk@thilikos.info}}.}\ \footnote{Co-financed by the European Union (European Social Fund - ESF) and Greek national funds through the Operational Program ``Education and Lifelong Learning'' of the National Strategic Reference Framework (NSRF) - Research Funding Program: ``Thales. Investing in knowledge society through the European Social Fund''.}
        }
\date{}
\begin{document}

 \maketitle

% \vspace{-.5cm}

 \begin{abstract}
In the \textsc{Red-Blue Dominating Set} problem, we are given a bipartite graph
$G = (V_B \cup V_R,E)$ and an integer $k$, and asked whether $G$ has a subset $D \subseteq V_B$ of at most $k$ ``blue'' vertices such that each ``red'' vertex from $V_R$ is adjacent to a vertex in $D$. We provide the first explicit linear kernel for this problem on planar graphs, of size at most $43k$.

\vspace{0.25cm}
\noindent \textbf{Keywords:} parameterized complexity; planar graphs; linear kernels; red-blue domination.
\end{abstract}

\section{Introduction}
\label{sec:intro}

%\cofeAm{0.28}{0.8}{90}{25}{100}
%%\cofeBm{0.5}{0.8}{90}{25}{100}
%\cofeCm{0.3}{0.65}{90}{75}{130}
%%\cofeDm{0.5}{0.8}{90}{25}{100}

\paragraph{Motivation.} The field of parameterized complexity (see~\cite{DF99,FG06,Nie06}) deals with algorithms for decision problems whose instances consist of a pair $(x,k)$, where~$k$ is known as the \emph{parameter}. A fundamental concept in this area is that of \emph{kernelization}. A kernelization algorithm, or \emph{kernel}, for a parameterized problem takes an instance~$(x,k)$ of the problem and, in time polynomial in $|x| + k$, outputs an equivalent instance~$(x',k')$ such that $|x'|, k' \leq g(k)$ for some
function~$g$. The function~$g$ is called the \emph{size} of the kernel and may
be viewed as a measure of the ``compressibility'' of a problem using
polynomial-time preprocessing rules. A natural problem in this context is to find
polynomial or linear kernels for problems that admit such kernelization algorithms.

A celebrated result in this area is the linear kernel for \textsc{Dominating Set} on planar graphs by Alber \emph{et al}.~\cite{AFN04}, which gave rise to an explosion of (meta-)results on linear kernels on planar graphs~\cite{GuNi07} and other sparse graph classes~\cite{BFL+09,FLST10,KLP+12}.
%A common feature of such kernels is a {\sl
%decomposition scheme} of the input graph that, loosely speaking, allows to deal
%with each part of the decomposition independently.
Although of great theoretical importance, these meta-theorems have two important drawbacks from a practical point of view. On the one hand, these results rely on a problem property called \emph{Finite Integer Index}, which guarantees the {\sl existence} of a linear kernel, but nowadays it is still not clear how and when such a kernel can be effectively {\sl constructed}. On the other hand, at the price of generality one cannot hope that general results of this type may directly provide explicit reduction rules and small constants for particular graph problems. Summarizing, as mentioned explicitly by Bodlaender \emph{et al}.~\cite{BFL+09}, these meta-theorems provide simple criteria to decide whether a problem admits a linear kernel on a graph class, but finding linear kernels with reasonably small constant factors for concrete problems remains a worthy investigation topic.

\paragraph{Our result.} In this article we follow this research avenue and focus on the \textsc{Red-Blue Dominating Set} problem (\RBDS for short) on planar graphs. In the \textsc{Red-Blue Dominating Set} problem, we are given a bipartite\footnote{In fact, this assumption is not necessary, as if the input graph $G$ is not bipartite, we can safely remove all edges between vertices of the same color.} graph $G = (V_B \cup V_R,E)$ and an integer $k$, and asked whether $G$ has a subset $D \subseteq V_B$ of at most $k$ ``blue'' vertices such that each ``red'' vertex from $V_R$ is adjacent to a vertex in $D$. This problem appeared in the context of the European railroad network~\cite{Wei98}.
%, appears naturally in applications, as may be the case of a network where powerful %computers (i.e., servers) could be represented by blue nodes, while destination hosts (i.e., %laptops) could be represented by red nodes, thus making routing base on a dominating set %more efficient.
From a (classical) complexity point of view, finding a red-blue dominating set (or \drb for short) of minimum size is \NP-hard on planar graphs~\cite{ABFN00}. From a parameterized complexity perspective, \RBDS parameterized by the size of the solution is $W[2]$-complete on general graphs and {\sf FPT} on planar graphs~\cite{DF99}. It is worth mentioning that \RBDS plays an important role in the theory of non-existence of polynomial kernels for parameterized problems~\cite{DLS09}.

%Since one can easily reduce \textsc{Face Cover} on planar graphs to \RBDS, and \textsc{Face Cover} is \textsc{NP}-hard, it follows that \RBDS is also \textsc{NP}-hard on planar graphs.

The fact that \RBDS involves a {\sl coloring} of the vertices of the input graph makes it unclear how to make the problem fit into the general frameworks of~\cite{GuNi07,BFL+09,FLST10,KLP+12}. In this article we provide the first explicit (and quite simple) polynomial-time data reduction rules for \textsc{Red-Blue Dominating Set} on planar graphs, which lead to a linear kernel for the problem.

\begin{theorem}
\label{th:main}
\textsc{Red-Blue Dominating Set} parameterized by the solution size has a linear kernel on planar graphs. More precisely, there exists a polynomial-time algorithm that for each planar instance $(G,k)$, either correctly reports that $(G,k)$ is a \textsc{No}-instance, or returns an equivalence instance $(G',k')$ such that $k'\leq k$ and $|V(G')| \leq 43 \cdot k'$.
\end{theorem}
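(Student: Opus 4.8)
The plan is to follow the now-classical strategy pioneered by Alber, Fellows, and Niedermeier for \textsc{Dominating Set} on planar graphs, adapting it to the bichromatic setting. First I would design a small set of polynomial-time reduction rules that eliminate ``useless'' structure. The natural starting points are: delete any red vertex of degree~$0$ (it can never be dominated, so this immediately yields a \textsc{No}-instance), delete any blue vertex of degree~$0$ (it is useless in a solution), and, crucially, handle red vertices that share the same or nested neighbourhoods among blue vertices. The key domination-type rule should say that if a blue vertex $b$ has a private structure forcing it into any solution (for instance, a red neighbour of degree~$1$), then we may select $b$, decrement $k$, and remove $b$ together with the red vertices it dominates. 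Symmetrically, if the closed blue-neighbourhood of one red vertex is contained in that of another, the dominated one is redundant and can be removed. After exhaustively applying these rules I would obtain a reduced instance $(G',k')$ with $k' \le k$ and prove, via a straightforward exchange/replacement argument, that each rule preserves the answer, giving the equivalence of the two instances.

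The heart of the argument is the \emph{bounding} step: showing that any reduced \textsc{Yes}-instance has $|V(G')| \le 46k'$. Here I would exploit planarity together with the fact that a hypothetical solution $D$ of size at most $k'$ dominates all red vertices. I would fix such a solution $D$ and build a \emph{region decomposition} of the plane around $D$, in the spirit of Alber \emph{et al.}: partition the vertices into those in $D$, those in the ``neighbourhoods'' of pairs of $D$-vertices, and the rest. The goal is to bound the number of red and blue vertices by a linear function of $|D| \le k'$. Blue vertices outside $D$ must each dominate at least one red vertex not dominated more cheaply (otherwise a reduction rule would apply), and the reduction rules should guarantee that red vertices come in a bounded number of ``types'' with respect to $D$, so that each $D$-vertex and each region accounts for only a constant number of surviving red and blue vertices. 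Summing these constant contributions over the at most $k'$ vertices of $D$ and over the linearly many regions (again a consequence of planarity, which caps the number of faces and hence of nonempty regions) yields the linear bound, with the explicit constant~$46$ emerging from a careful case analysis of how many vertices each region and each type can contain.

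The main obstacle I expect is precisely this constant-factor accounting under the coloring constraint. Unlike ordinary \textsc{Dominating Set}, here only blue vertices may belong to the solution while only red vertices need to be dominated, so the reduction rules cannot merge or delete vertices as freely, and the region decomposition must track two vertex classes simultaneously. The delicate point is to ensure that \emph{every} surviving vertex --- blue or red, inside a region or on its boundary --- is ``charged'' to some bounded-capacity object (a vertex of $D$, an edge of the region graph, or a face), and that no vertex is double-counted or left uncharged. Getting the bookkeeping tight enough to reach a small explicit constant, rather than merely \emph{some} linear bound, is where the real work lies; I would expect the proof to hinge on a lemma asserting that after reduction the number of distinct neighbourhood-types of red vertices seen by any constant-size subset of $D$ is itself constant, which is what ultimately converts the planar region count into the factor~$46$.
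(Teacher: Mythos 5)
Your outline follows the same route as the paper: Alber--Fellows--Niedermeier-style reduction rules (redundant red vertices with nested blue neighbourhoods, blue vertices forced into the solution by a private red neighbour) followed by a region decomposition around a solution $D$, with the number of regions bounded via planarity/Euler and the size of each region bounded by a constant. However, there is a genuine gap in your rule set: you have rules for single vertices only, and no rule for \emph{pairs} of blue vertices. The paper's Rule~4 (the analogue of Alber et al.'s Rule~2) handles the joint private neighbourhood $P(v,w)=\{r\in N(v)\cup N(w): N(N(r))\subseteq N(v)\cup N(w)\}$ of a pair of blue vertices, and it is exactly this rule that makes Proposition~\ref{prop_nb_incl} work: in a reduced graph either $|P(v,w)|\le 1$ or $P(v,w)$ is covered by a single extra blue vertex $u$, which is what caps a region at $15$ interior vertices. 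Without such a rule the constant-per-region claim is false. Concretely, take blue $v,w$, red vertices $a_1,\dots,a_m$ adjacent only to $v$, red vertices $c_1,\dots,c_m$ adjacent only to $w$, and blue vertices $b_1,\dots,b_m$ with $N(b_i)=\{a_i,c_i\}$. This plane graph survives all of your rules (all red neighbourhoods are pairwise incomparable, all blue neighbourhoods are pairwise incomparable and incomparable with $N(v)$ and $N(w)$, and no blue vertex has a nonempty private neighbourhood in the single-vertex sense), yet $\{v,w\}$ dominates everything while the graph has $3m+2$ vertices. The paper's Rule~4 (Case~1) collapses precisely this configuration by removing $v$, $w$, and $N(v,w)$ and decreasing $k$ by $2$.

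Two further points are understated in your sketch. First, ``planarity caps the number of nonempty regions'' is not automatic: one must build the multigraph on $D$ whose edges are the regions, show it admits a planar embedding (the paper has to re-embed boundary paths that overlap, using confluence), and show it is \emph{thin} (no two parallel edges bound an empty lens) before Euler's formula gives the $3|D|-6$ bound; thinness is where maximality of the decomposition and the reduction rules re-enter. Second, your hoped-for lemma that ``the number of neighbourhood-types of red vertices seen by a constant-size subset of $D$ is constant'' is not what the paper proves and is not true in the required generality without the pair rule (the example above gives $2m$ red vertices with pairwise incomparable neighbourhoods all private to $\{v,w\}$); the actual per-region count is a planarity-plus-incomparability case analysis carried out inside a single region after Rule~4 has restricted its structure.
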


This result complements several explicit linear kernels on planar graphs for other domination problems such as \textsc{Dominating Set}~\cite{AFN04}, \textsc{Edge Dominating Set}~\cite{WYGC13,GuNi07}, \textsc{Efficient Dominating Set}~\cite{GuNi07}, \textsc{Connected Dominating Set}~\cite{LMS11,GuIm10}, or \textsc{Total Dominating Set}~\cite{GaSa12}. It is worth mentioning that our constant is considerably smaller than most of the constants provided by these results. Since one can easily reduce the \textsc{Face Cover} problem on a planar graph to \RBDS (without changing the parameter)\footnote{Just consider the \emph{radial graph} corresponding to the input graph $G$ and its dual $G^*$, and color the vertices of $G$ (resp. $G^*$) as red (resp. blue). }, the result of Theorem~\ref{th:main} also provides a linear \emph{bikernel} for \textsc{Face Cover} (i.e., a polynomial-time algorithm that given an input
of \textsc{Face Cover}, outputs an equivalent instance of \RBDS\ with a graph whose size is linear in $k$). To the best of our knowledge, the best existing kernel for \textsc{Face Cover} is quadratic~\cite{KLL02}. Our techniques are much inspired by those of Alber~\emph{et al}.~\cite{AFN04} for \textsc{Dominating Set}, although our reduction rules and analysis are slightly simpler. We start by describing in Section~\ref{sec:RedRules} our reduction rules for \textsc{Red-Blue Dominating Set} when the input graph is embedded in the plane, and in Section~\ref{sec:analysis} we prove that the size of a reduced plane \YES-instance is linear in the size of the desired red-blue dominating set, thus proving Theorem~\ref{th:main}. Finally, we conclude with some directions for further research in Section~\ref{sec:concl}.

%%%%%%%%%%%%%%%%%%%%%%%%%%%%%%%%%%
%                                %
%                                %
%            REGLE               %
%                                %
%                                %
%%%%%%%%%%%%%%%%%%%%%%%%%%%%%%%%%%
\section{Reduction rules} \label{sec:RedRules}

In this section we propose reduction rules for \domrb, which are largely inspired by the rules that yielded the first linear kernel for \dom on planar graphs~\cite{AFN04}. The idea is to either replace the neighborhood of some blue vertices by appropriate gadgets, or to remove some blue vertices and their neighborhood when we can assume that these blue vertices belong to the dominating set. We would like to point out that our rules have also some points in common with the ones for the current best kernel for \dom~\cite{CFKX07}. In Subsection~\ref{elementary} we present two easy elementary rules that turn out to be helpful in simplifying the instance, and then in Subsections~\ref{Rsom} and~\ref{Rpair} we present the rules for a single vertex and a pair of vertices, respectively.

%\rem{ Indeed the idea of this improvement is to color vertices which have role similar to the one of our red vertices.}

%Notice also that we can manage to translate these rules for work directly on \fc. This is not so trivial since we cannot "remove face".

Before starting with the reduction rules, we need a definition.

\begin{defi}\label{def:reduce}
We say that a graph $G$ is \emph{reduced under a set of rules} if either none of these rules can by applied to $G$, or the application of any of them creates a graph isomorphic to $G$.
\end{defi}

With slight abuse of notation, we simply say that a graph is \emph{reduced} if it reduced under the whole set of reduction rules that we will define, namely Rules 1, 2, 3, and 4.

We would like to point out that the above definition differs from the usual definition of \emph{reduced} graph in the literature, which states that a graph is reduced if the corresponding reduction rules cannot be applied anymore. We diverge from this definition because, for convenience, we will define reduction rules  that could be applied {\sl ad infinitum} to the input graph, such as Case~2 of Rule~\ref{rgl_pair} defined in Subsection~\ref{Rpair}. For algorithmic purposes, the reduction rules that we will define are all local and concern the neighborhood of at most 2 vertices, which is replaced with gadgets of constant size. Therefore, in order to know when a graph is reduced (see Definition~\ref{def:reduce}), the fact whether the original and the modified graph are isomorphic or not  can be easily checked locally in constant time.

\subsection{Elementary rules} \label{elementary}

The following two elementary rules enable us to simplify an instance of \RBDS. We would like to point out that similar rules have been provided by Weihe~\cite{Wei98} in a more applied setting. We first need the definition of neighborhood.

\begin{defi}
Let $G=(V_B\cup V_R,E)$ be a graph.
The \emph{neighborhood} of a vertex $v \in V_B \cup V_R$ is the set $N(v) = \{u : \{v,u\} \in E \}$.
The \emph{neighborhood} of a pair of vertices $v,w \in V_B$ is the set $N(v,w) = N(v) \cup N(w)$.
\end{defi}

%\begin{rgl}     \label{rgl_bipart}
%If $G$ is not bipartite, iteratively remove edges between two vertices of the same color.
%\end{rgl}

\begin{rgl}     \label{rgl_bleu}
Remove any blue vertex $b$ such that $N(b) \subseteq N(b')$ for some other blue vertex $b'$.
\end{rgl}

\begin{rgl}    \label{rgl_rouge}
Remove any red vertex $r$ such that $N(r) \supseteq N(r')$ for some other red vertex $r'$.
\end{rgl}

\begin{lem} \label{lem_corr_elem}
Let $G=(V_B\cup V_R,E)$ be a graph.
If $G'$ is the graph obtained from $G$ by the application of Rule \ref{rgl_bleu} or \ref{rgl_rouge}, then there is a \drb in $G$ of size at most $k$ if and only if there is one in $G'$.
\end{lem}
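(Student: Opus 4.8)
The plan is to prove the equivalence separately for each of the two elementary rules, since applying either rule transforms $G$ into $G'$ by deleting a single vertex. In both cases the key observation is that the deleted vertex is "redundant" with respect to domination: in Rule~\ref{rgl_bleu} the removed blue vertex $b$ dominates no more red vertices than another blue vertex $b'$, and in Rule~\ref{rgl_rouge} the removed red vertex $r$ is dominated whenever another red vertex $r'$ is. I would argue that the set of \drb{}s of size at most $k$ is essentially preserved, establishing both directions of the "if and only if".

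First I would handle Rule~\ref{rgl_bleu}, where we delete a blue vertex $b$ with $N(b)\subseteq N(b')$ for some blue $b'$. Since no red vertex is deleted, $V_R$ is the same in $G$ and $G'$. For the forward direction, let $D$ be a \drb in $G$ with $|D|\le k$. If $b\notin D$, then $D$ is already a valid dominating set in $G'$ (no edges incident to surviving vertices were removed). If $b\in D$, I would replace $b$ by $b'$: since every red vertex dominated by $b$ lies in $N(b)\subseteq N(b')$, the set $D'=(D\setminus\{b\})\cup\{b'\}$ still dominates all of $V_R$, lies in $V_B\setminus\{b\}=V_B(G')$, and has size at most $k$. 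For the reverse direction, any \drb in $G'$ uses only vertices of $V_B\setminus\{b\}$, which are also present in $G$ with the same neighborhoods, so it remains a valid \drb in $G$ of the same size.

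Next I would treat Rule~\ref{rgl_rouge}, where we delete a red vertex $r$ with $N(r)\supseteq N(r')$ for some red $r'$. Here $V_B$ is unchanged. For the reverse direction, if $D\subseteq V_B$ dominates $V_R\setminus\{r\}$ in $G'$, then in particular $D$ dominates $r'$, so some $d\in D$ lies in $N(r')\subseteq N(r)$; hence $d$ also dominates $r$, and $D$ dominates all of $V_R$ in $G$. For the forward direction, a \drb in $G$ trivially dominates the smaller red set $V_R\setminus\{r\}$, so it is a \drb in $G'$; in both cases the size is unchanged.

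Neither rule presents a genuine obstacle; the argument is a short case analysis driven entirely by the neighborhood containments. The one point requiring minor care is the substitution step in Rule~\ref{rgl_bleu}, where one must verify both that $b'\neq b$ (guaranteed by the hypothesis "for some \emph{other} blue vertex") so that the swap stays inside $V_B(G')$, and that the size does not increase. Everything else follows immediately from the fact that deleting a vertex removes no edges among the remaining vertices, so dominating sets transfer cleanly between $G$ and $G'$.
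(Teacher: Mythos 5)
Your proposal is correct and follows essentially the same argument as the paper: for Rule~\ref{rgl_bleu} one swaps the deleted blue vertex $b$ for $b'$ using $N(b)\subseteq N(b')$, and for Rule~\ref{rgl_rouge} one observes that any blue vertex dominating $r'$ also dominates $r$. You simply spell out both directions more explicitly than the paper's terse proof does.
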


\begin{proof}
%For \rrgl{rgl_bipart}, since blue vertices do not need to be dominated, edges between blue vertices can be safely remove. Similarly, since red vertices cannot dominate, edges between red vertices can also be removed.
%\rem {An edge which does not connect a vertex which is potentially dominating to a vertex which needs to be dominated can be safely remove.}
For \rrgl{rgl_bleu}, if $N(b) \subseteq N(b')$ for two blue vertices $b$ and $b'$, then any solution containing $b$ can be transformed to a solution containing $b'$ in which the set of dominated red vertices may have only increased. For \rrgl{rgl_rouge}, %a red vertex $r$ enforces a dominating vertex in $N(r)$.
if $N(r') \subseteq N(r)$ for two red vertices $r$ and $r'$, then any blue vertex dominating $r'$ dominates also $r$.
\end{proof} %\todo{la def de voisinage est juste apres.}

%Because of \rrgl{rgl_bipart}, we can indeed assume that the graph $G$ is bipartite, with %bipartition $ V_B\cup V_R$.

\subsection{Rule for a single vertex} \label{Rsom}

In this subsection we present a rule for removing a blue vertex when it is necessarily a dominating vertex. For this we first need the definition of private neighborhood.

\begin{defi}
Let $G=(V_B\cup V_R,E)$ be a graph.
The \emph{private neighborhood} of a blue vertex $b$ is the set $P(b) = \{r \in N(b) : N(N(r)) \subseteq N(b) \}$.
\end{defi}

Let us remark that for (classical) \dom, each neighborhood is split into three subsets~\cite{AFN04}. The third one corresponds to our private neighborhood, but since non-private neighbors can be used to dominate the private ones, an intermediary set is necessary for (classical) \dom. In our problem this does not occur because non-private vertices are red and thus cannot belong to a \drb. This is one of the reasons why our rules are simpler.

\begin{rgl}  \label{rgl_som}
Let $v \in V_B$ be a blue vertex. If $|P(v)| \geq 1$:
\begin{itemize}\itemsep0em
\item remove $v$ and $N(v)$ from $G$,
\item decrease the parameter $k$ by $1$.
\end{itemize}

\end{rgl}

Our \rrgl{rgl_som} corresponds to Rule 1 for (classical) \dom~\cite{AFN04}. In both rules we can safely assume that vertex $v$ belongs to the dominating set, but for \RBDS we can remove it together with its neighborhood (and decrease the parameter accordingly); this is not possible for \dom, since vertices of the neighborhood possibly belong to the dominating set as well, hence they cannot be removed and a gadget is added to enforce $v$ to be dominating. We prove in the following lemma that \rrgl{rgl_som} is safe.

\begin{lem} \label{lem_corr_1}
Let $G=(V_B\cup V_R,E)$ be a graph reduced under Rules~\ref{rgl_bleu} and~\ref{rgl_rouge} and let $v \in V_B$.
If $(G',k-1)$ is the instance obtained from $(G,k)$ by the application of \rrgl{rgl_som} on a vertex $v$,
then there is a \drb in $G$ of size at most $k$ if and only if there is one in $G'$ of size at most $k-1$.
\end{lem}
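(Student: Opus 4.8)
The plan is to prove both directions by a short exchange argument driven by a single observation about private neighbors. First I would fix a vertex $r \in P(v)$ (which exists because $|P(v)| \geq 1$) and record the key fact that $v$ covers everything any neighbor of $r$ could cover: for $b' \in N(r)$ one has $N(b') \subseteq \bigcup_{b'' \in N(r)} N(b'') = N(N(r)) \subseteq N(v)$, the last inclusion being precisely the defining condition of $P(v)$. The upshot is that $v$ dominates every red vertex that any neighbor of $r$ could dominate, so $v$ may be forced into the solution ``for free''.

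For the forward implication I would start from a \drb $D$ of $G$ with $|D| \leq k$. Since $r$ is red it is dominated by some $b' \in D \cap N(r)$, and the key fact lets me swap $b'$ for $v$ to obtain a \drb $D^* = (D\setminus\{b'\})\cup\{v\}$ of the same size that contains $v$. I would then take $D' = D^* \setminus \{v\}$ and argue it is a \drb of $G'$: recalling that $G'$ deletes $v$ and the (red) set $N(v)$, the surviving red vertices are exactly $V_R \setminus N(v)$, none of which can have been dominated by $v$, so each is still dominated by some vertex of $D' = D^* \setminus \{v\}$, while adjacencies among surviving vertices are unchanged. This gives a solution of size at most $k-1$.

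The reverse implication is the easy one: from a \drb $D'$ of $G'$ with $|D'| \leq k-1$ I would simply return $D = D' \cup \{v\}$. Then $v$ dominates all of $N(v)$, while every red vertex outside $N(v)$ survives in $G'$ and is dominated there by $D'$, hence also in $G$; so $D$ is a \drb of $G$ of size at most $k$.

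The only delicate point is the exchange step: I must check simultaneously that replacing $b'$ by $v$ loses no coverage (this is exactly where the private-neighborhood condition enters) and that, after deleting the \emph{entire} neighborhood $N(v)$, no surviving red vertex was relying solely on $v$ — which holds because every surviving red vertex lies outside $N(v)$. I would also note that this argument does not appear to use the hypothesis that $G$ is reduced under Rules~\ref{rgl_bleu} and~\ref{rgl_rouge}; that assumption is presumably kept only for consistency with the surrounding kernelization.
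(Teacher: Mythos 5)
Your proof is correct, but the forward direction takes a genuinely different route from the paper's. The paper exploits the standing assumption that $G$ is reduced under Rules~\ref{rgl_bleu} and~\ref{rgl_rouge}: for $r \in P(v)$, any $b' \in N(r)\setminus\{v\}$ satisfies $N(b') \subseteq N(N(r)) \subseteq N(v)$ and would therefore have been deleted by Rule~\ref{rgl_bleu}, so in a reduced graph $N(r)=\{v\}$ and $v$ is \emph{forced} into every solution (this is precisely the Fact the paper proves right after the lemma). You instead use the same inclusion $N(b') \subseteq N(v)$ to run an exchange argument, swapping the dominator $b'$ of $r$ for $v$, which shows only that \emph{some} solution of size at most $k$ contains $v$ --- but that is all the lemma needs. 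Your closing observation is accurate: your argument establishes the safeness of Rule~\ref{rgl_som} for arbitrary instances, whereas the paper's one-line justification ``necessarily $v\in D$'' genuinely relies on Rules~\ref{rgl_bleu} and~\ref{rgl_rouge} having been applied exhaustively beforehand (which is why the reducedness hypothesis was added to the statement). The price of your generality is a slightly longer proof and the need to verify, as you do, that the swap loses no coverage and that no surviving red vertex relied on $v$; the paper's version is shorter given the standing assumption, which it needs anyway for Lemma~\ref{lem_corr_2} and the kernel-size analysis. The reverse direction is identical in both. One cosmetic point: $D^*=(D\setminus\{b'\})\cup\{v\}$ has size \emph{at most} $|D|$ rather than exactly the same size (it shrinks if $v$ already belonged to $D$ and $b'\neq v$), but this only helps you.
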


\begin{proof}
Let $D$ be a \drb in $G$ with $|D| \leq k$. Since $G$ is reduced under Rules~\ref{rgl_bleu}  and~\ref{rgl_rouge}, and \rrgl{rgl_som} can be applied on vertex $v$, necessarily $v \in D$ in order to dominate the vertices in $P(v)$. Since $G'$ does not contain any vertex of $N(v)$, $D\setminus\{v\}$ is a \drb of $G'$ of size at most $k-1$.
Conversely, let $D'$ be a \drb in $G'$ with $|D'| \leq k'$. Clearly $D' \cup \{v\}$ is a \drb of $G$ of size at most $k'+1$.
\end{proof}

In the following fact we prove that if we assume that Rules~\ref{rgl_bleu} and~\ref{rgl_rouge} have been exhaustively applied, then \rrgl{rgl_som} is equivalent to a simpler rule that consists in removing an appropriate connected component of size two and decreasing the parameter by one.

\begin{fait}
Let  $G=(V_B\cup V_R,E)$ be a graph reduced under Rules~\ref{rgl_bleu} and~\ref{rgl_rouge}, and let $v \in V_B$. Then $P(v) \neq \emptyset$ if and only if $N(v)=\{r\}$ with $N(r)= \{v\}$ for some $r \in V_R$.
\end{fait}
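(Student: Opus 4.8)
The plan is to prove the biconditional in both directions, relying crucially on the hypothesis that $G$ is reduced under Rules~\ref{rgl_bleu} and~\ref{rgl_rouge}. The nontrivial direction is to show that $P(v) \neq \emptyset$ forces the very rigid structure $N(v)=P(v)=\{r\}$ with $N(r)=\{v\}$; the reverse direction is essentially immediate from the definition of private neighborhood.

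\begin{proof}
Suppose first that $N(v)=P(v)=\{r\}$ and $N(r)=\{v\}$ for some $r \in V_R$. Then trivially $P(v)=\{r\}\neq\emptyset$, so this direction needs no use of the reduction hypotheses.

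For the forward direction, assume $P(v)\neq\emptyset$ and pick some $r \in P(v)$. By definition of private neighborhood, $r \in N(v)$ and $N(N(r)) \subseteq N(v)$. The first step is to argue that $N(r)=\{v\}$. Indeed, since $v \in N(r)$ we have $N(v) \subseteq N(N(r))$, and combined with $N(N(r)) \subseteq N(v)$ this yields $N(N(r)) = N(v)$. Now suppose for contradiction that some blue vertex $b' \neq v$ also lies in $N(r)$. Then $N(b') \subseteq N(N(r)) = N(v)$, so Rule~\ref{rgl_bleu} would apply to $b'$ (with witness $v$), contradicting that $G$ is reduced. Hence $N(r)=\{v\}$. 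The next step is to show $N(v)=\{r\}$. We already have $r \in N(v)$; suppose some red vertex $r' \neq r$ also lies in $N(v)$. Since $N(r)=\{v\}\subseteq N(r')$, we have $N(r')\supseteq N(r)$, so Rule~\ref{rgl_rouge} would apply to $r'$ (with witness $r$), again contradicting reducedness. Therefore $N(v)=\{r\}$, and since $r$ was an arbitrary element of $P(v)\subseteq N(v)=\{r\}$ we also get $P(v)=\{r\}$, completing the proof.
\end{proof}

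I expect no serious obstacle here: the statement is a clean structural consequence of the two elementary rules, and the main point is simply to apply Rule~\ref{rgl_bleu} and Rule~\ref{rgl_rouge} in the right direction to kill any extra blue neighbor of $r$ and any extra red neighbor of $v$, respectively. The only subtlety worth double-checking is the inclusion $N(v)\subseteq N(N(r))$ used to pin down $N(N(r))=N(v)$; this holds precisely because $v\in N(r)$, so that every neighbor of $v$ is a second-neighbor of $r$.
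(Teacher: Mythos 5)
Your proof is correct and uses the same two ingredients as the paper's: privateness of $r$ forces $N(b')\subseteq N(N(r))\subseteq N(v)$ for any blue $b'\in N(r)$, so Rule~\ref{rgl_bleu} eliminates extra blue neighbors of $r$, and then $N(r)=\{v\}$ lets Rule~\ref{rgl_rouge} eliminate extra red neighbors of $v$. If anything, your write-up is slightly more complete than the paper's, which argues only the contradiction from $|N(v)|\geq 2$ and leaves the step $N(r)=\{v\}$ implicit.
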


\begin{proof}
First, if $N(v)=\{r\}$ with $N(r)= \{v\}$ then $P(v)=\{r\} \neq \emptyset$. Conversely, suppose that $P(v) \neq \emptyset$, let $r \in P(v)$,  and assume for contradiction that  $|N(v)| \geq 2$ or $N(r) \setminus \{v\} \neq \emptyset$. In the former case,  since by Rule~\ref{rgl_rouge} the neighborhood of vertex $r$ is incomparable  with that of other red vertices in $N(v)$, necessarily $\{v,b\} \subseteq N(r)$ for some $b \in V_B$. And since vertex $r$ is a private neighbor of $v$, necessarily $N(b) \subseteq N(v)$, contradicting the hypothesis that $G$ is reduced under Rule~\ref{rgl_bleu}. In the latter case, it also follows that $\{v,b\} \subseteq N(r)$ for some $b \in V_B$, and we reach the same contradiction. \end{proof}

\subsection{Rule for a pair of vertices} \label{Rpair}

We now provide a rule for either reducing the size of the neighborhood of a pair of blue vertices, or for removing some blue vertices together with their neighborhood. For this, we first define the private neighborhood of a pair of blue vertices.

\begin{defi}
Let $G=(V_B\cup V_R,E)$ be a graph. The \emph{private neighborhood} of a pair of blue vertices $v,w \in V_B$ is the set $P(v,w) = \{r \in N(w,v) : N(N(r)) \subseteq N(v,w) \}$.
\end{defi}

We would like to note that the definition of private neighborhood is similar to that of the third subset of neighbors defined for (classical) \dom \cite{AFN04}.

\begin{rgl}  \label{rgl_pair}            % \todo{nouvelle regle}
Let $v,w$ be two distinct blue vertices such that $|P(v,w)| \geq 1$.
Let $\mathcal{D} = \{d \in V_B: P(v,w) \subseteq N(d)  \}$.
We distinguish the following cases:
\begin{enumerate}

\item $P(v,w) \nsubseteq N(v)$ and $P(v,w) \nsubseteq N(w)$:

%\modif{If there is (at least) one blue vertex which dominates $P(v,w)$:}{}
\begin{itemize}\itemsep0em
\item remove $P(v,w)$ from $G$,
\item add two new red vertices $r',r''$ and the edges $\{v,r' \},\{w,r'' \}$,
\item for each vertex $d \in \mathcal{D}$, add the edges $\{d,r' \},\{d,r'' \} $.
\end{itemize}

%\modif{
%Otherwise (that is, there is no blue vertex which dominates $P(v,w)$ alone):
%\begin{itemize}\itemsep0em
%\item remove $v$, $w$, and $N(v,w)$ from $G$,
%\item decrease the parameter $k$ by $2$;
%\end{itemize}}{}

\item $P(v,w) \subseteq N(v)$ and $P(v,w) \subseteq N(w)$:

\begin{itemize}\itemsep0em
\item remove $P(v,w)$ from $G$,
\item add a new red vertex $r$ and the edges $\{v,r \},\{w,r \}$,
\item for each vertex {$d \in \mathcal{D}$}, add the edge $\{d,r \}$.
\end{itemize}

\item $P(v,w) \subseteq N(v)$ and $P(v,w) \nsubseteq N(w)$:

\begin{itemize}\itemsep0em
\item remove $P(v,w)$ from $G$,
\item add a new red vertex $r'$ and the edge $\{v,r' \}$,
\item for each vertex {$d \in \mathcal{D}$}, add the edge $\{d,r' \}$.
%\modif{
%remove $v$ and $N(v)$ from $G$,
%decrease the parameter $k$ by $1$;
%}{}

\end{itemize}

\item $P(v,w) \nsubseteq N(v)$ and $P(v,w) \subseteq N(w)$:

\begin{itemize}
\item symmetrically to Case 3.
\end{itemize}

\end{enumerate}
\end{rgl}

%\todo {ds sec 3 utilise cette version ou la version simple (cas 3 4) }

Again, our \rrgl{rgl_pair} corresponds to Rule 2 for (classical) \dom~\cite{AFN04}. Remark that, if $\mathcal{D}$ is empty, then the added vertices $r'$ and $r''$ have degree one, and hence \rrgl{rgl_som} can be applied to remove vertex $v$ or $w$. Observe also that, if $P(v,w) \not \subseteq N(w)$ (or  $N(w)$), then there is a red vertex in $ P(v,w) \setminus N(w)$, which plays a role similar to the added vertex $r'$; proving this observation is the key point to prove that \rrgl{rgl_pair} is safe. %We prove in the following lemma that \rrgl{rgl_pair} is safe.
\begin{lem} \label{lem_corr_2}
Let $G=(V_B\cup V_R,E)$ be a graph reduced under Rules \ref{rgl_bleu} and \ref{rgl_rouge} and let $v,w$ be two distinct blue vertices.
If $(G',k)$ is the instance obtained from $(G,k)$ by the application of \rrgl{rgl_pair} on $v$ and $w$,
then there is a \drb in $G$ of size at most $k$ if and only if there is one in $G'$ of size at most $k$.
\end{lem}

\begin{proof}
We distinguish the four possible cases of application of \rrgl{rgl_pair}. For each case we prove that $G$ has a solution of size $k$ if and only if $G'$ has one.

\begin{enumerate}\itemsep0em
\item
Let $D$ be a \drb in $G$ of size $k$.
Since we are in Case~1, if $|D\cap N(P(v,w))| > 2$ then we can assume that $v,w \in D$ because $N(v,w)$ is a superset of the neighborhood of any pair of vertices in $N(P(v,w))$.
Therefore, in $G'$, the vertices $r',r''$ are dominated either by $v,w$ or by some vertex $d \in \mathcal{D}$ such that, in the graph $G$,  $P(v,w) \subseteq N(d)$. Hence $D$ is a \drb in $G'$ of size at most $k$.

Conversely, let $D'$ be a \drb in $G$ of size $k$.
Since $r'$ and $r''$ need to be dominated, we have that either $v,w \in D'$ or $d \in D'$ for some $d  \in mathcal{D}$. Hence $D'$ is a \drb in $G$ of size at most $k'$.

Observe that, since we are in Case~1, in $G$ there is a vertex in $P(v,w) \setminus N(w)$ (that is, a private neighbor which cannot be dominated by $w$); and similarly, there is a vertex in  $P(v,w) \setminus N(v)$. Hence $|P(v,w)| \geq 2$, and therefore the rule does not increase the number of vertices of the graph.

\item
Let $D$ be a \drb in $G$ of size $k$.
Since we are in Case~2, vertex $r$ is dominated by some vertex $d \in \mathcal{D} \cup \{v,w\}$.
Hence $D$ is a \drb in $G'$ of size at most $k$.

Conversely, let $D'$ be a \drb in $G$ of size $k$. Since $r$ needs to be dominated, we have that $d \in D'$ for some $d  \in \mathcal{D} \cup \{v,w\}$. Hence $D'$ is a \drb in $G$ of size at most $k$.

\item
Let $D$ be a \drb in $G$ of size $k$.
Since we are in Case~2,  vertex $r'$ is dominated by some vertex $d \in \mathcal{D} \cup \{v\}$.
Hence $D$ is a \drb in $G'$ of size at most $k$.

Conversely, let $D'$ be a \drb in $G$ of size $k$. Since $r'$ needs to be dominated, we have that $d \in D'$ for some $d  \in \mathcal{D} \cup \{v\}$. Hence $D'$ is a \drb in $G$ of size at most $k$.

\item Symmetrically to Case~3.
\end{enumerate}\vspace{-.6cm}
\end{proof}

\section{Analysis of the kernel size} \label{sec:analysis}

We will show that a graph reduced under our four rules has size linear in $|D|$, the size of a solution. To this aim, we assume that the graph is \emph{plane} (that is, given with a fixed embedding).
We recall that an \emph{embedding} of a graph $G=(V,E)$ in the plane $\mathbb{R}^2 $ is a function $\pi : V \cup E \mapsto \mathcal{P}(\mathbb{R}^2) $, which maps each vertex to a point of the plane and each edge to a simple curve of the plane, in such a way that the vertex images are pairwise disjoint, and each edge image corresponding to an edge $\{u,v\}$ has as endpoints the vertex images of $u$ and $v$, and does not contain any other vertex image. An embedding is \emph{planar} if any two edge images may intersect only at their endpoints. In the following, for simplicity, we may identify vertices and edges with their images in the plane. Following Alber~\emph{et al}.~\cite{AFN04}, we will define a notion of region in an embedded graph adapted to our definition of neighborhood, and we will show that, given a solution $D$, there is a maximal region decomposition $\Re$ such that:
\begin{itemize}\itemsep0em
  \item $\Re$ has $O(|D|)$ regions,
  \item $\Re$ covers all vertices, and
  \item each region of $\Re$ contains $O(1)$ vertices.
\end{itemize}
The three following propositions treat respectively each of the above claims.

%\begin{defi}\label{def:reduce}
%A graph $G$ is \emph{reduced} if none of the reduction Rules 1, 2, 3, or 4 can be applied to $G$.
%\end{defi}

We now define our notion of region, which slightly differs from the one defined in~\cite{AFN04}.

\begin{defi}\label{def:region}
Let $G = (V_B \cup V_R,E) $ be a plane graph and let $v,w \in V_B$ be a pair of distinct blue vertices.
A \emph{region} $R(v,w)$ between $v$ and $w$ is a closed subset of the plane such that:
\begin{itemize}\itemsep0em
  \item the boundary of $R(v,w)$ is formed by two simple paths connecting $v$ and $w$, each of them having at most 4 edges,
  \item all vertices (strictly) inside $R(v,w)$ belong to $N(v,w)$ or $N(N(v,w))$, and
  \item the complement of $R(v,w)$ in the plane is connected.
\end{itemize}

We denote by $\partial R(v,w)$ the boundary of $R(v,w)$ and by $V(R(v,w))$ the set of vertices in the region (that is, vertices strictly inside, on the boundary, and the two extremities $v,w$). The \emph{size} of a region is $|V(R(v,w))|$.

Given two regions $R_1(v_1,w_1)$ and $R_2(v_2,w_2)$, we denote by $R_1(v_1,w_1) \cup R_2(v_2,w_2)$  the union of the two closed sets in the plane, and by $R_1(v,w) \uplus R_2(v,w)$ the special case where the union defines a region; note that this latter case can occur only if the two regions share both extremities and one path of their boundaries. Note also that the boundary of $R_1(v,w) \uplus R_2(v,w)$ is defined by the two other paths of the boundaries.
\end{defi}

Note that a subgraph defining a region has diameter at most 4, to be compared with diameter at most 3 in~\cite{AFN04}. We would like to point out that the assumption that vertices $v$ and $w$ are distinct is not necessary, but it makes the proofs easier. In the following, whenever we speak about a pair of vertices we assume them to be  distinct. Note also that we do not assume that the two paths of the boundary of a region are edge-disjoint or distinct, hence in particular a path corresponds to a degenerated region.

We want to decompose a reduced graph into a set of regions which do not overlap each other. To formalize this, we provide the following definition of \emph{crossing} regions, which can be seen as a more formal and precise version of the corresponding definition given in~\cite{AFN04}.

Recall that we are considering a plane graph, hence for each vertex $v$, the embedding induces a circular ordering on the edges incident to $v$. We first need the (recursive) definition of \emph{confluent} paths.

\begin{defi}\label{def:confluent} Two simple paths $p_1,p_2$ in a plane graph $G$ are \emph{confluent} if:
\begin{itemize}\itemsep0em
\item they are vertex-disjoint, or
\item they are edge-disjoint and for each vertex $v \in p_1 \cap p_2$ distinct from the  extremities, among the four edges of $p_1,p_2$ containing $v$, the two edges in $p_1$ are consecutive in the circular ordering given by the embedding (hence, the two edges in $p_2$ are consecutive as well), or
\item the two paths obtained by contracting common edges are confluent.
\end{itemize}\end{defi}

Note that, by definition, a path is confluent with itself. In Definition~\ref{def:confluent}, whenever an edge is contracted, the planar embedding of $G$ is modified in the natural way (if multiple edges or loops appear, they can be safely removed).

\begin{defi}\label{def:cross} Two distinct regions $R_1, R_2$ in the plane \emph{do not cross} if:
\begin{itemize}\itemsep0em
\item  $(R_1 \setminus \partial R_1) \cap R_2 = (R_2 \setminus \partial R_2) \cap R_1 = \emptyset $ (i.e., the interiors of the regions are disjoint), and
\item any path $p_1$ in $\partial R_1$ is confluent with any path $p_2$ in $\partial R_2$.
\end{itemize}
\noindent Otherwise, we say that $R_1, R_2$ \emph{cross}. If two regions cross because of two paths $p_1 \in \partial R_1$ and $p_2 \in \partial R_2$ that are not confluent, we say that these regions \emph{cross on $v \in  p_1 \cap p_2$} if:
\begin{itemize}\itemsep0em
\item $v$ does not satisfy the ordering condition of Definition~\ref{def:confluent}, or
\item $v$ is an extremity of an edge $e$ such that in $G/e$ (i.e., the graph obtained from $G$ by contracting $e$), $R_1$ and $R_2$ cross on the vertex resulting from the contraction.
\end{itemize} \end{defi}

Of course, two regions can cross on many vertices. We use the latter condition of Definition~\ref{def:cross} in the case of degenerated regions (that is, paths), where only this condition may hold.

We now have all the material to define what a region decomposition is. Compared to Alber \emph{et al.}~\cite[Definition 3]{AFN04}, we have two additional conditions to be satisfied by a \emph{maximal} decomposition, which are in fact conditions satisfied by the region decomposition constructed by the greedy algorithm presented in~\cite{AFN04}.

\begin{defi}\label{def:regionDec} Let $G = (V_B \cup V_R,E) $ be a plane graph and let $D \subseteq V_B$.
A \emph{$D$-decomposition} of $G$ is a set of regions $\Re$ between pairs of vertices in $D$ such that:
\begin{itemize}\itemsep0em
  \item any region $R \in \Re$ between two vertices $v$, $w$ is such that $V(R)\cap D =\{v,w\}$, and
  \item any two regions in $\Re$ do not cross.
\end{itemize}

We denote $V(\Re) = \bigcup _{R\in\Re} V(R)$. A $D$-decomposition is \emph{maximal} if there are no regions
\begin{itemize}
\item $R \notin \Re$ such that $\Re \cup \{R\}$ is a $D$-decomposition with $V(\Re) \subsetneq V(\Re \cup \{R\})$,
\item $R \in \Re$  and $R' \notin \Re$ with $V(R) \subsetneq V(R')$ such that $\Re \cup \{R\} \setminus \{R'\}$ is a $D$-decomposition, or
\item  $R_1,R_2 \in \Re$ such that $\Re \cup \{R_1 \uplus R_2\} \setminus \{R_1,R_2\}$ is a $D$-decomposition.
\end{itemize}

\end{defi}

%Note that in Definition~\ref{def:regionDec} we define the decomposition to be {\sl maximal} with respect to vertex inclusion, and not {\sl maximum} in terms of number of vertices.

In order to bound the number of regions in a decomposition, we need the following definition. We consider multigraphs without loops.

\begin{defi}\label{def:thin}
A planar multigraph $G$ is \emph{thin} if there is a planar embedding of $G$ such that for any two edges $e_1,e_2$ with identical endvertices,
%\todo{j'ai uniformisé: extremity pour une region ou un chemin endpoint pour une courbe; ms la ?},
there is a vertex image inside the two areas enclosed by the edge images of $e_1$ and $e_2$. In other words, no two edges are homotopic.
\end{defi}

In~\cite[Lemma 5]{AFN04}, the bound on the number of regions in a decomposition relies on applying Euler's formula to a thin graph. Since it appears that some arguments are missing in that proof, for completeness we provide here an alternative proof.

\begin{lem}\label{fait_thin}
If $G=(V,E)$ is a thin planar multigraph with $|V| \geq 3$, then $|E| \leq 3|V| -6$.
\end{lem}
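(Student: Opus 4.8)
The plan is to prove the bound via Euler's formula together with a double-counting of edge--face incidences, exactly as in the classical argument for simple planar graphs, the whole difficulty being to certify that the \emph{thin} hypothesis forces every face to have length at least~$3$ even though parallel edges are allowed. First I would reduce to the case where $G$ is connected: if $G$ has $c \geq 2$ components, I add $c-1$ edges joining vertices lying in distinct components, drawn inside a common face so as to preserve planarity. Since the endpoints of each new edge were previously in different components, no new edge is parallel to an already present edge, and by choosing distinct endpoints the new edges are pairwise non-parallel as well; hence the augmented graph is still thin, has the same vertex set, and has strictly more edges. As the inequality we want only becomes harder when $|E|$ grows, it suffices to establish it for the connected augmentation, so from now on $G$ is connected and $|V| \geq 3$.

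For a connected plane (multi)graph Euler's formula gives $|V| - |E| + |F| = 2$, where $F$ is the set of faces. Summing the lengths of the boundary walks over all faces counts every edge exactly twice, so $\sum_{f \in F} \deg(f) = 2|E|$, where $\deg(f)$ denotes the number of edge-sides on the boundary walk of $f$. The heart of the proof --- and the step I expect to be the main obstacle --- is the claim that $\deg(f) \geq 3$ for every face $f$. A face of length~$1$ would require a loop, which is excluded. A face $f$ of length~$2$ is bounded either by a single edge traversed twice, which forces that edge to be an isolated $K_2$ and hence, by connectivity and $|V| \geq 3$, cannot occur, or else by two distinct edges $e_1, e_2$ sharing the same pair of endpoints. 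In the latter case $e_1$ and $e_2$ cobound the open disk $f$, which contains no vertex; thus $e_1$ and $e_2$ are homotopic, contradicting Definition~\ref{def:thin}. This is precisely where the thin hypothesis is indispensable and where the argument of~\cite{AFN04} is incomplete: one must use that \emph{both} regions cut out by a pair of parallel edges contain a vertex, so that no empty bigon can appear as a face.

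With $\deg(f) \geq 3$ for all $f$, double counting yields $2|E| = \sum_{f \in F} \deg(f) \geq 3|F|$, that is $|F| \leq \tfrac{2}{3}|E|$. Substituting into Euler's formula gives $2 = |V| - |E| + |F| \leq |V| - \tfrac{1}{3}|E|$, and rearranging produces $|E| \leq 3|V| - 6$, as desired. The only genuinely delicate point is the face-length bound of the previous paragraph; once the thin condition is read as forbidding an empty region between any two parallel edges, the remainder is the textbook Euler estimate.
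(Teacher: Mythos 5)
Your proof is correct, and it takes a genuinely different route from the paper's. The paper extends the notion of triangulation to thin multigraphs: it adds edges inside large faces and inside bigons (using thinness to find a vertex to connect to), then converts the resulting triangulated multigraph into a simple triangulation by subdividing parallel edges and adding edges to the subdivision vertices, and finally reads off $|E'| = 3|V|-6$ from Euler's formula for the simple graph. You instead apply Euler's formula directly to the (connected augmentation of the) multigraph and run the standard face-degree double count, with the single non-trivial observation that thinness --- read as ``both regions bounded by a pair of parallel edges contain a vertex'' --- rules out empty bigonal faces, so every face has degree at least $3$. Your version is more elementary and isolates exactly where the thin hypothesis enters; it also handles disconnectedness cleanly up front via the augmentation (note that the paper's remark that a face of size $2$ contains a vertex ``because $G$ is thin'' only really parses for disconnected embeddings, where a bigonal face can enclose another component). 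The paper's construction buys an explicit triangulated supergraph on the same vertex set, which is slightly more than the inequality requires. Both arguments are sound and yield the same bound.
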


\begin{proof}
Recall that a triangulated simple graph is a maximal planar graph, that is, a graph where all faces contain exactly 3 edges.
For a triangulated (connected) simple graph $H=(V_H,E_H)$, Euler's formula~\cite{Die05} states that  $|E_H| =  3|V_H| - 6$. We proceed to extend the notion of triangulation to thin multigraphs and we will show that Euler's formula still holds.
% A la ligne
Let the \emph{size} of a face be the number of edges it contains. Given a thin planar multigraph $G=(V,E)$, we define recursively a triangulation $G'=(V,E')$ of $G$ as follows. For each face $f$ of size 4 or more, we add arbitrarily an edge between two non-adjacent vertices. Note that two such vertices always exist. Indeed, otherwise $f$ would contain four vertices $v_1,v_2,v_3,v_4$ in this cyclic order around $f$, such that both edges $\{v_1,v_3\}$ and $\{v_2,v_4\}$ are drawn in the same region of the complement of $f$; this would contradict the planarity of the embedding. As $G$ is thin, note there is at least one vertex inside each face of size 2. Then we add the two edges between each such inner vertex and the two vertices defining the face. We say that a multigraph is \emph{triangulated} whenever no more edges can be added.
% A la ligne
Now, given a triangulated planar multigraph $G'=(V,E')$, we transform it into a triangulated planar simple graph $H=(V_H,E_H)$ as follows. As far as there exists a multiple edge between two vertices $u$ and $v$, let $e$ be an occurrence  of this edge. We know that $e$ belongs to two faces of size 3 containing vertices $x,u,v$ and $y,v,u$ respectively, for two vertices $x$ and $y$. Then we subdivide $e$ into two edges $\{u,w_e\}, \{v,w_e\}$, where $w_e$ is a new vertex, and we add the edges $\{x,w_e\}$ and  $\{y,w_e\}$. Let $p$ be number of vertices added during this procedure. Note that $|V_H| = |V| + p$ and $|E_H| = |E'| + 3p$. Since Euler's formula holds for $H$, we have that $|E'| + 3p = 3(|V| + p) - 6$, or equivalently, $|E'| = 3|V|  - 6$. Therefore, it holds that $|E| \leq |E'| = 3|V|  - 6$, and the lemma follows.
\end{proof}

We would like to point out that it is possible to prove that any vertex in a reduced graph is on a path with at most 4 edges connecting two dominating vertices. Since in what follows we will use several restricted variants of this property, we will provide an ad-hoc proof for each case.

%We do not prove this fact here. \ig{What does this sentence mean??}

\begin{prop} \label{prop_nb_reg}
Let $G$ be a reduced plane graph and let $D$ be a \drb in $G$ with $|D| \geq 3$.
There is a maximal $D$-decomposition of $G$ such that $|\Re| \leq 3 \cdot |D| - 6$.
\end{prop}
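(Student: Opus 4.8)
The plan is to build the maximal $D$-decomposition $\Re$ greedily, as in Alber \emph{et al.}~\cite{AFN04}, and then bound $|\Re|$ by associating to $\Re$ an auxiliary planar multigraph $H$ on the vertex set $D$ and applying \rlem{fait_thin}. First I would construct $\Re$: start from the empty decomposition and repeatedly add any region $R(v,w)$ between two vertices of $D$ that does not cross the regions already chosen and that strictly increases $V(\Re)$, preferring maximal regions and merging two regions into a single region $R_1 \uplus R_2$ whenever this is possible; when no further addition, enlargement, or merge is available we stop, and by \rdefi{def:regionDec} the resulting $\Re$ is maximal. The point of also performing enlargements and merges during the construction is precisely to guarantee the two extra maximality conditions of \rdefi{def:regionDec}, which are what will let me control parallel edges in~$H$.

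Next I would define the multigraph $H=(D,E_H)$ by placing, for each region $R(v,w) \in \Re$, one edge between $v$ and $w$; thus $|E_H| = |\Re|$ by construction, since each region is between a distinct pair of vertices of $D$. To route the edge images I would use the boundary paths of the regions: each region contributes a curve from $v$ to $w$ drawn inside $R(v,w)$ (for instance along one of its two boundary paths), and the non-crossing condition of \rdefi{def:cross} together with confluence of boundary paths (\rdefi{def:confluent}) ensures these curves can be drawn so that distinct region-edges do not cross. This realizes $H$ as a planar multigraph. It has no loops because the two endpoints of every region are distinct vertices of $D$. It remains to argue that $H$ is \emph{thin} in the sense of \rdefi{def:thin}: if two region-edges between the same pair $v,w$ were homotopic, i.e.\ enclosed an empty area with no vertex image of $D$ inside, then the two corresponding regions would share both extremities and a common boundary portion and could be merged via $\uplus$ (or one would contain the other), contradicting the maximality of $\Re$. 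Hence $H$ is thin, and since $|D| = |V(H)| \geq 3$, \rlem{fait_thin} yields $|\Re| = |E_H| \leq 3|V(H)| - 6 = 3|D| - 6$.

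The main obstacle, I expect, is the topological bookkeeping needed to make the thinness argument airtight, rather than the counting itself. Concretely, I must show that two homotopic parallel region-edges really do force a forbidden configuration among the regions: I need to rule out that a third vertex of $D$ (or a whole region) sits in the enclosed area in a way that blocks the merge, and I must handle degenerate regions, where a boundary ``path'' has collapsed and crossing is detected only through the edge-contraction clause of \rdefi{def:cross}. Carefully tracking how the $\uplus$ operation interacts with the shared boundary path, and verifying that the merged object is again a legitimate region (its complement stays connected, its boundary paths still have at most $4$ edges), is the delicate step. Once the correspondence between ``homotopic parallel edges in $H$'' and ``mergeable or nested regions in $\Re$'' is established, maximality of $\Re$ closes the argument and the bound follows immediately from \rlem{fait_thin}.
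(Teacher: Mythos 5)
Your overall architecture matches the paper's: greedily build a maximal $D$-decomposition, encode it as a multigraph $G_\Re$ on $D$ with one edge per region, embed the edges along boundary paths, show the multigraph is thin, and invoke \rlem{fait_thin}. However, the step you yourself flag as ``topological bookkeeping'' is in fact the crux of the proof, and your proposed resolution of it does not work. You claim that two parallel edges enclosing a region free of $D$-vertices force the two regions to share a boundary path (so that $\uplus$ applies) or to be nested. This is false as stated: the open set $O$ between $R_1(v,w)$ and $R_2(v,w)$ can be nonempty and can contain red vertices and blue vertices \emph{not} in $D$, in which case neither merging nor nesting is available, and no purely topological or homotopy-theoretic argument will produce a contradiction. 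The obstruction is graph-theoretic, not topological, and the missing ingredients are the reduction rules and the fact that $D$ dominates $V_R$. The paper's proof handles the set $O$ by a three-way case analysis: if $O$ meets a third region, that region must also be between $v$ and $w$ (else it would cross $R_1$ or $R_2$) and one recurses, with termination guaranteed by the circular order around $v$ and $w$; if $O$ contains no blue vertex, every red vertex in $O$ is dominated by $v$ or $w$, so $R_1\cup O\cup\partial(O)$ is a strictly larger region, contradicting maximality; if $O$ contains a blue vertex $b\notin D$, then \rrgl{rgl_bleu} forces $b$ to have neighbors $r\in N(v)\setminus N(w)$ and $r'\in N(w)\setminus N(v)$, so $\{v,r,b,r',w\}$ is an addable region not crossing $R_1,R_2$, again contradicting maximality. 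None of these ideas appears in your proposal, and without them the thinness claim is unsupported.

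A secondary, smaller gap: you assert that confluence of boundary paths lets you draw the region-edges without crossings, but boundary paths of distinct non-crossing regions may share vertices and even edges, so the naive embedding of $G_\Re$ along boundary paths is not planar as given. The paper devotes a nontrivial re-embedding procedure (perturbing edge images inside small neighborhoods $C_\epsilon(F)$ of the shared portions, processed in inclusion order of the intersecting edge sets) to repair this, and then must verify that this perturbation does not destroy the thinness witness. Your proof would need to supply both of these arguments to be complete.
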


\begin{proof} The proof strongly follows the one of Alber \emph{et al.}~\cite[Lemma 5 and Proposition 1]{AFN04}. Even if our definition of region is different, we shall show that the same algorithm that they present can be used to construct such a $D$-decomposition. Nevertheless, we will provide some arguments concerning planarity that were missing in~\cite{AFN04}.

We consider the algorithm that, for each vertex $u$, adds greedily to the decomposition $\Re$ a region $R$ between any two vertices $v,w \in D$ (possibly, $v$ or $w$ may be equal to $u$), containing $u$,
not containing any vertex of $D \setminus \{v,w\}$,
not crossing any region of $\Re$, and of maximal size, if it exists.
By definition, $\Re \cup \{R\}$ is a region decomposition, and by greediness and because regions are chosen of maximal size, the decomposition is maximal according to Definition~\ref{def:regionDec}.

In order to apply Lemma~\ref{fait_thin}, we proceed to define a multigraph that we will prove to be thin. Let $G_\Re =(D, E_\Re)$ be the multigraph with vertex set $D$ and with an edge $\{v,w\}$ for each region in $\Re$ between two dominating vertices $v$ and $w$. Let $\pi$ be the embedding of the plane graph $G$, and we consider the embedding $\pi_\Re$ of $G_\Re$ such that for $v \in D$, $\pi_\Re(v) = \pi(v)$,  and for $e \in E_\Re$ corresponding to a region $R \in \Re$ with $p$ an arbitrary boundary path of $R$, $\pi_\Re(e) = \bigcup_{f \in p} \pi(f)$. \modifOK{(note that such a path does not contain inner dominating vertices, hence $\pi_\Re(e)$ does not contain vertex images). For an edge set $F \subseteq E_\Re$, we denote $\pi_\Re(F) = \bigcap_{e\in F} \pi_\Re(e) $. If the constructed embedding is not planar, we proceed to modify it in order to make it planar as follows. Given $F,F' \subseteq E_\Re$, observe that if $F \subseteq F'$ then $\pi_\Re(F) \supseteq \pi_\Re(F')$.}{We modify the constructed embedding $\pi_\Re$ in order to make it planar. First, observe that images $\pi_\Re(v),\pi_\Re(w)$ are distinct for two distinct vertices $v,w \in D$. Secondly, observe that $\pi_\Re(e)$ for $e \in E_\Re$ does not contain the image of a vertex, because  by definition the corresponding path in $G$ does not contain a dominating vertex. Hence, if $\pi_\Re$ is not planar this is due to an edge intersection. If such an intersection exists, we proceed as follows. For an edge set $F \subseteq E_\Re$, we denote $\pi_\Re(F) = \bigcap_{e\in F} \pi_\Re(e) $.} As far as there
exists an edge set $F\subseteq E_\Re$ such that $\pi_\Re(F) \nsubseteq \bigcup_{v \in D} \pi_\Re(v)$, we apply the following procedure in the inclusion order of such edge sets, starting with a maximal such set of edges $F$.
Consider the subset of the plane $C(F)$ containing, for each $x\in \pi_\Re(F) \setminus \bigcup_{v \in D} \pi_\Re(v)$, a closed ball of center $x$ and radius $\epsilon_x$ for a sufficiently small real number $\epsilon_x > 0$,
such that $C(F)$ intersects only curves corresponding to edges in $F$. \modifOK{; note that such a subset of the plane exists, since $C(F)$ forms a sharp corner around dominating vertices \ig{il faudrait donner un petit argument de pourquoi $C(F)$ fait un ``sharp corner''} and does not contain them, and since we proceed by inclusion order\\ \ig{pourquoi le faire par ``inclusion order'' garantit l'existence d'$\epsilon$?}.}{Such a subset $C(F)$ of the plane exists. Indeed, since we proceed by inclusion order, the considered curve $\pi_\Re(F)$ does not intersect $\pi_\Re(e)$ for any $e \in E_\Re \setminus F$, except for vertices of $D$; hence for all $x \in \pi_\Re(F) \setminus \bigcup_{v \in D} \pi_\Re(v) $ there exists an $\epsilon_x$ such that the ball of center $x$ and radius $\epsilon_x$ does not intersect any $\pi_\Re(e)$.} Since the paths defining the borders of regions in $\Re$ are confluent (that is,  in the intersection of several paths on a vertex $v$, for each subset of two of these paths, the edges of each path are consecutive around $v$), it is easy to see that for each edge $e \in F$ there is a connected curve $C_e$ inside $C(F)$ disjoint from all the edge images of $\pi_\Re$ except for the two endpoints of $\pi_\Re(e) \cap C(F)$. Hence we can replace, for each edge $e$ in such a set $F$, the edge image $\pi_\Re(e)$ in $C(F)$ with the curve $C_e$. When there does not exist such an edge set $F$ anymore, the obtained embedding of $G_\Re$ is planar, since in that case any two edge images may intersect only at their endpoints. This re-embedding procedure is schematically illustrated in Figure~\ref{fig_replonge}.
%}

\begin{figure}[h]
\begin{center}
   \includegraphics[scale=1]{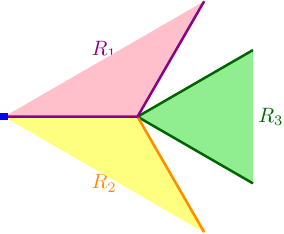}
   \includegraphics[scale=1]{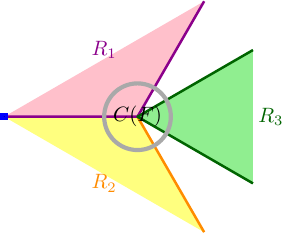}
   \includegraphics[scale=1]{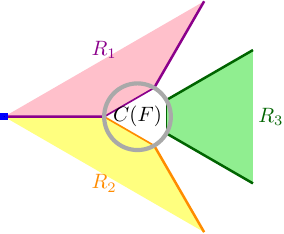}
   \includegraphics[scale=1]{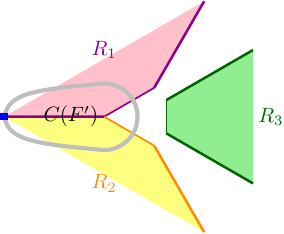}
   \includegraphics[scale=1]{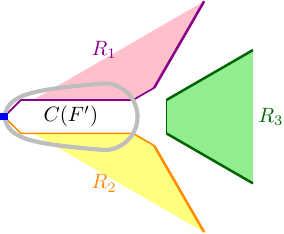}
   \includegraphics[scale=1]{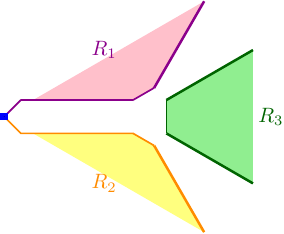}

\end{center}
   \caption{An example of the re-embedding procedure. We consider three regions $R_1,R_2,R_3$ (filled with light colors), where the three considered edges of $G_\Re$ (depicted with dark colors) are initially embedded on the boundary of the regions. Let $F$ be the set of these three edges. First, we modify the intersection  of the edges in $F$, which is a point. Let $F'$ be the set of two edges corresponding to $R_1$ and $R_2$. Then, we modify the intersection of the edges in $F'$, which is a segment. At the end of this procedure, the edges are pairwise disjoint except possibly for their extremities.}
   \label{fig_replonge}
\end{figure}

We will now prove that the multigraph $G_\Re$ is thin, that is, for each pair of edges $e_1,e_2 \in E_\Re$ between the same pair of vertices $v,w$ (corresponding to regions $R_1,R_2$ and embedded following  paths $p_1,p_2$, respectively) there is a vertex of $D$ in both open subsets of the plane enclosed by $e_1,e_2$. This will allow us to apply \rlem{fait_thin}, implying that the constructed decomposition has at most $3 |D| - 6$ regions. Let $O_\Re$ be one of these two open sets.
%\modif{}{Let $e_1, e_2 \in E_\Re$ with the same extremity $v,w$ and corresponding to the regions $R_1,R_2 \in \Re$ and embedded following to the paths $p_1,p_2$. Let $O_\Re$ be one of these two open sets.}
Let $O$ be the open set enclosed by $p_1,p_2$ and corresponding to $O_\Re$ (in the sense that the same orientation is chosen to traverse $e_1,e_2$ and $p_1,p_2$) minus the two regions $R_1,R_2$.
%In fact, we will prove that a vertex of $D$ is contained in the set $O$.
Note that the symmetric difference of $O_\Re$ and $O$ consists of \modifOK{some of the two regions \ig{tu veux dire toute une r\'egion, ou juste une partie d'une des r\'egions?}}{an element of $\{ \emptyset, R_1, R_2, R_1\cup R_2 \}$ (which may be in $O_\Re$ but not in $O$)}  and some of the subsets $C(F)$ used to re-embed $e_1,e_2$ \modifOK{}{(which may be in $O$ but not in $O_\Re$)}. Since $C(F)$ does not contain a dominating vertex (for any \modifOK{$F \in \mathcal{F}_{e_1,e_2}$}{$F$ containing $e_1$ or $e_2$}), every dominating vertex in $O$ is in \modifOK{$O'$ \ig{on n'a pas d\'efinit $O'$}}{$O_\Re$}. \modifOK{And since $O'$ is a subset of $O_\Re$, every dominating vertex in $O$ is in $O_\Re$.}{}Therefore, in order to prove that there is a vertex of $D$ in the set $O_\Re$, it suffices to prove that there is a vertex of $D$ in the set $O$.
These two definitions are illustrated in Figure \ref{fig_ouvert}.

\begin{figure}[h]
\begin{center}
   \includegraphics[scale=1]{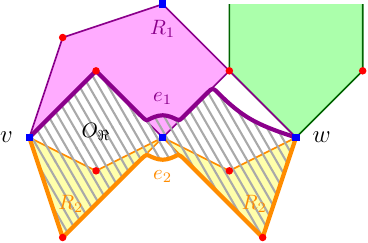} (a)
   \includegraphics[scale=1]{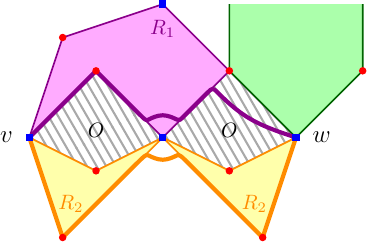} (b)
\end{center}
   \caption{An illustration of the definition of $O_\Re$\modif{$, O'$,}{} and $O$ in the proof of Proposition~\ref{prop_nb_reg}. The \modifOK{}{figures show the} two regions $R_1$ and $R_2$ of $G$ (filled with light colors) and the two corresponding edges $e_1$ and $e_2$ of $G_\Re$ \modif{are depicted in (a)}{}. Note that the re-embedding procedure has been applied twice: firstly on the blue vertex common to $R_1$ and $R_2$ and secondly on an edge common to $R_1$ and a third region. The open set $O_\Re$ enclosed by $e_1$ and $e_2$ (the \modifOK{dark}{dashed} area) is depicted in \modifOK{(b)}{(a)}.\modif{The open set $O'$ (the dark area) is depicted in (c). Finally,}{} The open set $O \subseteq \overline{R_1 \cup R_2}$ (the \modifOK{dark}{dashed} area) is depicted in \modifOK{(d)}{(b)}. }
   \label{fig_ouvert}
\end{figure}

 Note that $O$ is not empty, since otherwise $R_1$ and $R_2$ would share an entire path of their boundaries, which contradicts the maximality of the decomposition, as in that case $R_1, R_2$ could be replaced with $R_1 \uplus R_2$.

 Let us assume for the sake of contradiction that there is no vertex of $D$ in $O$. We distinguish three cases:
\begin{itemize}\itemsep0em

\item If $O$ intersects a region $R_3 \in \Re$, then $R_3$ is necessarily between $v$ and $w$, as otherwise $R_3$ would cross $R_1$ or $R_2$. In this case, we can recursively apply the same argument to $R_1,R_3$ and $R_2,R_3$. If $u_i \in V(R_i) \cap N(v)$ for $i \in\{1,2,3\}$, according to the circular order around $v$ we have that $u_1<u_3<u_2$  (similarly around $w$). Since the degrees of $v$ and $w$ are finite, so is the number of considered regions $R_3$ in this recursive argument, which therefore terminates.

\item Otherwise, assume first that $O$ does not contain any blue vertex.  Then the red vertices in $O$ (if any) must be dominated by $v$ or $w$. Hence, since we are assuming that  $O$ does not intersect any region  of $\Re$, it follows that $R_1 \cup O \cup \partial(O)$  is a larger region enclosed by a path of $R_1$ and a path of $R_2$, where $\partial(O)$ denotes the boundary of the open set $O$. We have a contradiction with the maximality of $\Re$.
%\todo{$O \cup \partial(O)$ est une region, non ? }

\item Otherwise, if $O$ contains at least one blue vertex $b \notin D$, we shall show that $b$ lies on a path $p =\{v,r,b,r',w\}$ for some vertices $r,r'$. Indeed, since $G$ is reduced under \rrgl{rgl_bleu}, $N(b) \neq \emptyset$, there is some vertex $r \in N(b)$ that is dominated, without loss of generality, by $v$ and not by $w$. Again, by \rrgl{rgl_bleu}, $b$ and $v$ have incomparable neighborhoods, so there is some $r' \in N(b) \setminus N(v)$ that is dominated by $w$ and not by $v$. Notice that $r,r'$ are in $O$ or in its boundary, hence $P = \{v,r,b,r',w\}$ is a region which does not cross $R_1,R_2$, a contradiction with the maximality of $\Re$.

\end{itemize}\vspace{-.7cm}
\end{proof}

\begin{prop} \label{prop_nb_excl}
Let $G=(V_B \cup V_R,E)$ be a reduced plane graph and let $D$ be a \drb in $G$ with $|D| \geq 3$.
If $\Re$ is a maximal $D$-decomposition, then $V_B \cup V_R\subseteq V(\Re )$. %\rem{it should be $\leq 2 \cdot |D|$ instead}
\end{prop}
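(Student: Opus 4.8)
The plan is to argue by contradiction: assume that some vertex $x \in (V_B \cup V_R) \setminus V(\Re)$ exists, and exhibit a region that can be added to $\Re$, violating the maximality of $\Re$ as given in Definition~\ref{def:regionDec}. Recall first that, by Definition~\ref{def:region}, every vertex lying strictly inside a region of $\Re$ already belongs to $V(\Re)$; hence $x$ lies in the open complement of $\bigcup_{R \in \Re} R$, away from every region boundary. Before building the new region I would record two structural consequences of $G$ being reduced: by the Fact following \rrgl{rgl_som}, no red vertex has degree exactly $1$ (such a neighbour would be a private neighbour, triggering \rrgl{rgl_som}), and by \rrgl{rgl_bleu} no blue vertex is isolated and, more usefully, no blue vertex $b$ satisfies $N(b)\subseteq N(v)$ for any other blue vertex $v$.

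The core of the proof is the structural claim that $x$ lies on a simple path $P$ with at most four edges whose endpoints are two distinct vertices $v,w \in D$ and whose internal vertices avoid $D$, so that $V(P)\cap D = \{v,w\}$. I would establish this by a short case analysis on the colour of $x$. If $x$ is red it is dominated by some $v \in D$ and, having degree at least $2$, has a further blue neighbour $b$; if $b \in D$ the path $v\,x\,b$ already works, and otherwise, since \rrgl{rgl_bleu} forbids $N(b)\subseteq N(v)$, the vertex $b$ has a red neighbour $r'\notin N(v)$ dominated by some $w\in D\setminus\{v\}$, giving the $4$-edge path $v\,x\,b\,r'\,w$. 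If $x$ is a blue vertex outside $D$, it has a red neighbour $r$ dominated by some $v\in D$, and exactly as above \rrgl{rgl_bleu} yields a second red neighbour $r'$ dominated by some $w\neq v$, producing the path $v\,r\,x\,r'\,w$. Finally, if $x\in D$, it has a red neighbour $r$ (no blue vertex is isolated), $r$ has a second neighbour $b$, and the same dichotomy on $b$ places $x$ at one endpoint of a path of length at most $4$. In each case distinctness of the listed vertices (reds differ from blues, and the two dominators differ because one dominates a vertex the other does not) guarantees that $P$ is simple and meets $D$ only at its endpoints.

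Viewing $P$ as a degenerate region $R$ between $v$ and $w$ (its two boundary paths coinciding with $P$), the endpoint condition $V(R)\cap D=\{v,w\}$ of Definition~\ref{def:regionDec} holds, and since $x\in V(R)\setminus V(\Re)$ adding $R$ would strictly enlarge $V(\Re)$. It therefore only remains to check that $R$ does not cross any region already in $\Re$ in the sense of Definition~\ref{def:cross}; this is the step I expect to be the main obstacle. Because $R$ has empty interior, the interior-disjointness condition is immediate, so the real work is to verify, or to arrange, that the curve $P$ is confluent with every boundary path of $\Re$. Here I would use that $x$ sits in the open complement of the regions and exploit the connectedness of each region's complement (the third bullet of Definition~\ref{def:region}) together with the confluence bookkeeping of Definition~\ref{def:confluent}: wherever $P$ would meet a boundary path non-confluently, the shared edges can be contracted and the path rerouted through the free face containing $x$ without losing $x$ or exceeding four edges.

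Once a non-crossing representative is obtained, $\Re\cup\{R\}$ is a $D$-decomposition strictly larger than $\Re$, contradicting its maximality and forcing $V_B\cup V_R\subseteq V(\Re)$. The delicate point, mirroring the open-set case analysis in the proof of \rprop{prop_nb_reg}, is precisely this guarantee that a covering path can always be taken inside the uncovered face; I would treat it by the same confluence-and-recursion argument used there, and I would double-check that the degenerate regions produced genuinely qualify as regions under Definition~\ref{def:region} so that the first maximality clause of Definition~\ref{def:regionDec} applies verbatim.
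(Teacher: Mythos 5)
Your overall strategy is the same as the paper's: show that every uncovered vertex lies on a path of at most four edges whose endpoints are two distinct vertices of $D$ and whose interior avoids $D$, view that path as a degenerate region, and contradict the maximality of $\Re$. Your case analysis producing the path (red vertex, blue non-dominating vertex, dominating vertex) is essentially the paper's, and your use of \rrgl{rgl_bleu} to extract the second red neighbour $r'\notin N(v)$ is a valid variant of the paper's appeal to $r\notin P(v)$.

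However, there is a genuine gap at the step you yourself flag as ``the main obstacle'': showing that the candidate path $P$ can be realized as a region that does not cross any region already in $\Re$. Your proposed fix --- ``the shared edges can be contracted and the path rerouted through the free face containing $x$'' --- does not work as stated, because $P$ must be a path in the graph $G$, not an arbitrary curve; you cannot reroute it through a face unless actual edges are available there, and nothing guarantees they are. The paper resolves this differently and concretely: since $x\notin V(\Re)$, no existing region can cross $P$ \emph{on} $x$, so any crossing with a region $R(y,z)\in\Re$ occurs at one of the other internal vertices of $P$ (the vertices playing the roles of $b$ or $r'$). At such a crossing vertex one replaces the far half of $P$ by an excursion along $\partial R(y,z)$ ending at an extremity $y$ of the obstructing region --- e.g.\ if the crossing is on $b$, take the edge $\{b,r''\}$ that succeeds $\{b,r\}$ in the circular order around $b$ and form $P'=\{v,r,b,r'',y\}$ --- which still contains $x$, still has at most four edges, and still ends at a vertex of $D$. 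One then recurses on $P'$, and termination follows because each replacement moves monotonically in the circular order around the crossing vertex, whose degree is finite. Without this mechanism (or an equivalent one), the contradiction with maximality is not established, so you should replace your topological rerouting sketch with this graph-theoretic rerouting and its termination argument.
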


\begin{proof} The proof again follows that of Alber \emph{et al.}~\cite[Lemma 6 and Proposition 2]{AFN04}, where similar arguments are used to bound the number of vertices which are not included in a maximal region decomposition. We have to show that all vertices are included in a region of $\Re$, that is, $V_R \cup V_B \subseteq V(\Re)$.

Since $N(D)$ covers $V_R$, it holds that $V_R = \bigcup_{v \in D} N(v)$. We proceed to prove that $N(v) \subseteq V(\Re)$ for all $v \in D$. Let $v\in D$ and let $ r \in N(v)$. We now show that there is a path $p=\{v,r,\dots,w\}$ with $w \in D$ and with at most four edges. Indeed, since \rrgl{rgl_som} cannot be applied on $v$, then $r \notin P(v)$, and by definition of private neighborhood there are two vertices $b \in N(r)$ and $r' \in  N(b) \setminus N(v)$. If $b \in D$, then $P = \{v,r,b\}$, with $b = w$,  is the desired path. Otherwise, $r'$ is dominated by some vertex $w\in D$, and then $P = \{v,r,b,r',w\}$.

Assume for contradiction that $r \notin V(\Re)$. It follows that $P \nsubseteq R $ and $P$ does not cross $R$ on $r$ for any $R \in \Re$. We distinguish two cases depending on the length of $P$:

\begin{itemize}\itemsep0em
\item If $P=\{v,r,w\}$ for some vertex $w$, then $P$ can be added to $\Re$, which contradicts the maximality of $\Re$.
\item If $P=\{v,r,b,r',w\}$ for some vertices $b,r',w$ with $w \neq v$, then either $P$ can be added, which again contradicts the maximality of $\Re$, or $P$ crosses some region $R(x,y)$ of $\Re$. Recall that we assume $r \notin V(R(x,y))$. We distinguish two cases (see Figure~\ref{fig_prop} for an illustration):
\begin{itemize}\itemsep0em

\item[$\circ$] If $P$ and $R(x,y)$ cross on $b$, then $b$ is on $\partial R(x,y)$. Let $r''$ be a vertex on $\partial R(x,y)$ such that the edge $\{b,r''\}$ is the successor of the edge $\{b,r\}$ in the circular order defined by the embedding. In this case, we consider the path $p' =\{v,r,b,r'',x\}$, where we have assumed without loss of generality that $r''$ is a neighbor  of $x$; the case where $r''$ is a neighbor  of $y$ is symmetric.
\item[$\circ$] Otherwise, necessarily $P$ crosses a region $R(x,y) \in \Re$ on $r'$, and then $r'$ is on $\partial R(x,y)$. Assume without loss of generality that $r' \in N(x)$. In this case, we consider the path $p'=\{v,r,b,r',x\}$.
%\todo{variente on peut "assume u not in R"au debut et donc eclure le cas ou $P$ est dans $R$}
\end{itemize}
In both cases, either $P'$ can be added to $\Re$, which contradicts the maximality of $\Re$, or $P'$ crosses another region and we can apply recursively the same argument. Again, the recursion must be finite, as $r<r''<r'$ and $b<x<w$ in the circular order around $b$ and $r'$, respectively, and the degrees of $b$ and $r'$ are finite.
\end{itemize}
So $\bigcup_{v \in D} N(v) \subseteq V(\Re)$, as we wanted to prove.

\begin{figure}[h]
\begin{center}
   \includegraphics[scale=0.9]{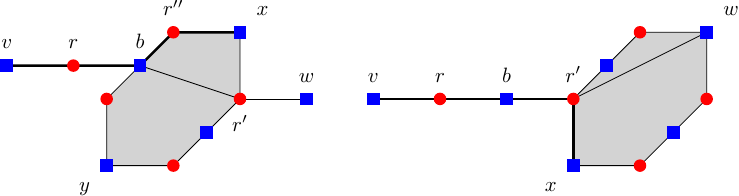}
\end{center}
   \caption{Illustration of the two ways that the path $\{v,r,b,r',w\}$, as defined in Proposition~\ref{prop_nb_excl}, can cross a region. Blue (resp. red) vertices are depicted with $\textcolor{blue}{\blacksquare}$ (resp. \LARGE{$\textcolor{red} {\bullet}$}\normalsize{)}. }
   \label{fig_prop}
\end{figure}

We finally show that $V_B \subseteq V(\Re)$. Recall that we assume that $|D| > 2$.  We consider separately vertices in $V_B \setminus D$ and vertices in $D$.

Let first $b \in V_B \setminus D$. Since $G$ is reduced,  $b$ is neighbor of two red vertices $r'$ and $r''$ dominated respectively by $v$ and $w$ with $v \neq w$, as otherwise vertex $b$ could be removed by \rrgl{rgl_bleu}. We consider the (degenerated) region $\{v,r',b,r'',w\}$, and with an argument similar to the one given above, if we assume that $b \notin V(\Re)$ we obtain a contradiction.
Let then $v \in D$. By \rrgl{rgl_som}, $v$ cannot be a single dominating vertex in a connected component. Hence there is a vertex $w \in D$ at distance at most 4 from $v$. We consider a path between $v$ and $w$ as a region, and once again we obtain a contradiction using similar arguments. So $V_B \subseteq V(\Re)) $.

Therefore, all the vertices of $G$ belong to the decomposition $\Re$, as we wanted to prove.
\end{proof}

%\rem { jai fait qq modif dans cette preuve }

\begin{prop}\label{prop_nb_incl}
Let $G=(V_B \cup V_R,E)$ be a reduced plane graph, let $D$ be a \drb in $G$, and let $v,w \in D$.
Any region $R$ between $v$ and $w$ contains at most $ 14 $ vertices distinct from $v$ and $w$.
\end{prop}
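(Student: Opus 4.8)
The plan is to treat the region $R(v,w)$ as a topological closed disc (its complement is connected by Definition~\ref{def:region}) whose boundary is the closed walk $P_1\cup P_2$ formed by the two $v$–$w$ boundary paths, and to bound separately the vertices on $\partial R(v,w)$ and those strictly inside. For the boundary I would first exploit bipartiteness: since $v$ and $w$ are both blue, every $v$–$w$ path has even length, so each boundary path has length $2$ or $4$ and hence at most $3$ internal vertices; the two paths therefore contribute at most $6$ vertices besides $v$ and $w$ (sharing of vertices between the two paths, allowed by Definition~\ref{def:region}, can only decrease this). I would also record the two structural facts that drive the whole count. First, every red vertex of $V(R(v,w))$ lies in $N(v)\cup N(w)$: the first and last red on each short boundary path is a neighbour of $v$ or $w$, and every interior red lies in $N(v,w)\cup N(N(v,w))$ by Definition~\ref{def:region}, where the second set is entirely blue, so interior reds lie in $N(v,w)$. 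Second, every blue vertex of $V(R(v,w))$ lies at distance exactly $2$ from $v$ or from $w$.

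Next I would bound the blue vertices lying strictly inside. For such a vertex $b$, reasoning exactly as in the proof of Proposition~\ref{prop_nb_excl}, $b$ has a red neighbour $r\in N(v)$ (say); since Rule~\ref{rgl_bleu} cannot be applied to $b$, its neighbourhood is incomparable with $N(v)$, so $b$ has a red neighbour $r'\notin N(v)$, and as $r'$ lies in the region it belongs to $N(w)$. Thus $b$ sits on a length-$4$ path $v\!-\!r\!-\!b\!-\!r'\!-\!w$. These paths are pairwise non-crossing by planarity and hence are linearly ordered across the disc. To cap their number I would invoke the pair rule: a red of the form $r$ whose blue neighbours are all interior belongs to $P(v,w)$, so by reducedness under Rule~\ref{rgl_pair} either there is at most one such red or a single blue vertex dominates all of them; combined with the incomparability forced by Rule~\ref{rgl_bleu} among the vertices $b$, only a bounded number of these chords can occur without producing a vertex reducible by Rule~\ref{rgl_bleu} or a configuration contradicting Rule~\ref{rgl_pair}.

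Finally I would bound the red vertices lying strictly inside. Each such red is a neighbour of $v$ or $w$, and the chords constructed above cut the disc into cells. Within each cell the interior reds have pairwise incomparable neighbourhoods by Rule~\ref{rgl_rouge}, and since all region reds lie in $N(v)\cup N(w)$ and only a bounded number of blue vertices are available to distinguish them (namely $v$, $w$, the at most two boundary blues, and the bounded number of interior blues from the previous step), each cell contains only a constant number of reds. Summing the boundary contribution with the bounded numbers of interior blue and interior red vertices then yields the claimed bound of $15$.

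I expect the main obstacle to be the planarity bookkeeping in the middle step: formalising the linear order of the chords, handling boundary paths that share vertices or edges so that the cells are well defined, and invoking Rule~\ref{rgl_pair} on the pair $(v,w)$ in the precise form that converts ``many parallel chords'' into an applicable reduction. Once this is in place, the purely combinatorial counting that produces the explicit constant $15$ is a routine case analysis over the possible lengths of the two boundary paths and the distribution of the interior reds among the cells.
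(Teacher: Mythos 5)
Your overall strategy is the same as the paper's: bound the boundary (at most $6$ vertices besides $v,w$), observe that every interior blue vertex $b$ lies on a length-$4$ chord $v\!-\!r\!-\!b\!-\!r'\!-\!w$ obtained from the incomparability forced by Rule~\ref{rgl_bleu}, use Rule~\ref{rgl_pair} to reduce to the case $P(v,w)\subseteq N(u)$ for a single blue $u$, and use Rules~\ref{rgl_bleu} and~\ref{rgl_rouge} to limit how many vertices can coexist with incomparable neighbourhoods. Up to that point the plan is sound.

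The genuine gap is that everything that actually produces the number $15$ is deferred to ``a bounded number'' and ``a routine case analysis,'' and this is precisely where the content of the proposition lies. Concretely, three things are missing. First, your scheme is circular as stated: you want to bound the interior reds ``per cell,'' but the number of cells is the number of chords, which is exactly the quantity you have not yet bounded; saying that too many parallel chords would ``produce a configuration contradicting Rule~\ref{rgl_pair}'' is not an argument, since Rule~\ref{rgl_pair} only constrains $P(v,w)$ and a chord vertex $b$ need not have any private neighbour. The paper breaks this circularity differently: it takes the two chords $\tilde b,\tilde b'$ closest to $u$ on either side, so that $R$ splits into at most $3$ subregions with the middle one containing $u$, all of $P(v,w)$ (every private red must be adjacent to $u$), and no other interior blue; it then shows by a degree/planarity case analysis that each of the two outer subregions contains at most one further blue vertex, giving at most $5+2=7$ blues. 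Second, you never bound $|P(v,w)|$; the paper shows $|P(v,w)|\le 4$ by analysing the possible neighbourhoods $\{u,v,\tilde b\},\{u,v,\tilde b'\},\{u,w,\tilde b\},\{u,w,\tilde b'\}$ inside the middle subregion, again using planarity and Rule~\ref{rgl_rouge}. Third, without these two explicit bounds the sum $6+4+5=15$ (non-private boundary reds, private reds, interior/chord blues beyond the two boundary blues) cannot be assembled, so the constant in the statement is not actually established by your argument. These case analyses are short but not automatic, and they are the proof.
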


\begin{proof}
Let $R$ be an arbitrary region between $v$ and $w$. Since $G$ is reduced under \rrgl{rgl_pair}, $|P(v,w)| \leq 2$. And, since $G$ is reduced under Rules~\ref{rgl_bleu} and~\ref{rgl_rouge}, each vertex strictly inside $R$ has a neighborhood incomparable with the neighborhood of any other vertex. It will become clear from the proof  that the worst bound is given by the case when $\partial R$ is as large as possible, that is, when it contains 8 vertices, which will be henceforth denoted by $v,r_v,b,r_w,w,r_w',b'$, and $r_v'$.

Let us first bound the number of non-private red neighbors of $v$ and $w$ in $R$.

\begin{claimN}\label{fact:non-private}
There are at most 4 vertices from  $N(v,w) \setminus P(v,w)$ strictly inside $R$.
\end{claimN}
\begin{proof}
Let $s$ be a non-private red vertex. The neighborhood of $s$ contains $v$ or $w$ (because $s \in N(v,w)$), $b$ or $b'$ (because $ s \notin P(v,w)$), and  at least another blue vertex (because $N(s)$ has to be incomparable with $N(r_v), N(r_w), N(r'_v)$, and $N(r'_w)$).

Assume for contradiction that there are two non-private red vertices $s$ and $s'$ strictly inside $R$ such that $\{v,b\} \subseteq N(s) \cap N(s')$. (By symmetry, the same argument applies to $\{v,b'\}, \{w,b\}$, or $\{w,b'\}$ instead of $\{v,b\}$.) Since both $s$ and $s'$ are neighbors of $v$ and $b$, by planarity one of them, say $s'$ cannot be adjacent to $w$ nor $b'$. Therefore, since $N(s')$ has to be incomparable with $N(s)$, there should exist a vertex $t' \in N(s') \setminus N(s)$, which again by planarity cannot be neighbor of any other red vertex in $R$, except possibly $r_v$. But then $N(t') \subseteq N(v)$, and therefore vertex $t'$ should have been deleted by Rule~\ref{rgl_bleu}, a contradiction. Thus, vertex $s'$ cannot exist.

Summarizing the above discussion, it holds that any red vertex in $N(v,w) \setminus P(v,w)$ has to be neighbor of $v$ or $w$, and of $b$ or $b'$, and any two such red vertices  cannot have simultaneously a common neighbor in the set $\{v,w\}$ and in the set $\{b,b'\}$. Hence, there can be at most 4 red vertices in $N(v,w) \setminus P(v,w)$ distinct from $r_v,r'_v,r_w,r'_w$, with neighbors $\{v,b\}, \{v,b'\}, \{w,b\}$, and $\{w,b'\}$, respectively. This configuration is depicted in Figure~\ref{fig_region}.
\end{proof}

 %%% FIGURE!!
\begin{figure}[t]
\vspace{-1.0cm}
\begin{center}
   \includegraphics{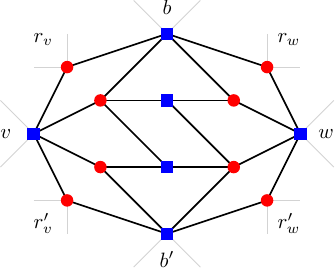}(a)
   \includegraphics{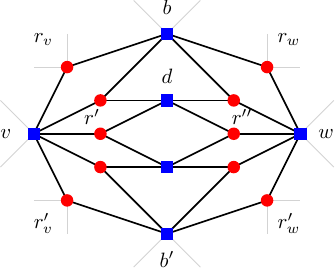}(b)
   \includegraphics{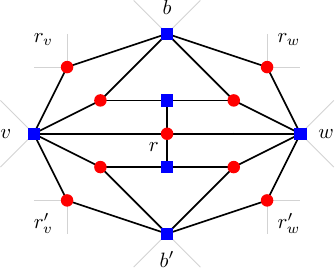}(c)
   \includegraphics{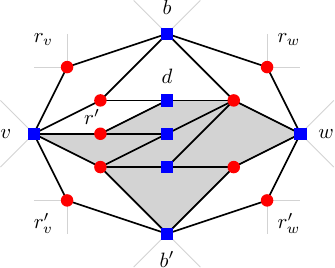}(d)
\end{center}
\vspace{-.4cm}
   \caption{Examples of a worst cases in the proof of Proposition~\ref{prop_nb_incl}.
   Blue (resp. red) vertices are depicted with $\textcolor{blue}{\blacksquare}$ (resp. \LARGE{$\textcolor{red} {\bullet}$}\normalsize{)}.
   In (a) the \rrgl{rgl_pair} is not applied.
   In (b) the \rrgl{rgl_pair} Case~1 is applied.
   In (c) the \rrgl{rgl_pair} Case~2 is applied.
   In (c) the \rrgl{rgl_pair} Case~3 is applied.
   The global worst cases correspond to (b) and (d).}
   \label{fig_region}
\end{figure}

It just remains to bound the number of blue vertices strictly inside $R$, and to this end we  distinguish five cases which correspond to the case where \rrgl{rgl_pair} is not applied, plus the four cases of this rule.

From the proof that follows, it will be easy to see that  the maximum number of blue vertices in $R$ is achieved when the number of non-private red vertices in the interior of $R$ is also maximum; this number is 4 by Claim~\ref{fact:non-private}. So we assume henceforth that $R$ contains $4$ non-private red vertices, and from the proof of Claim~\ref{fact:non-private} it follows that their neighborhoods in the boundary of $R$ are as depicted in Figure~\ref{fig_region}.   These 4 red vertices together with their incident edges toward the boundary split the region $R$ into $5$ subregions (we use the term \emph{subregion} for convenience, but it has nothing to do with the definition of region). Note that only one of these subregions, say $R'$, contains both $v$ and $w$. Since the graph is reduced under Rule~\ref{rgl_bleu} (similarly to the proof of Claim~\ref{fact:non-private}), it follows that only the subregion $R'$ can contain blue vertices. Thus, it just remains to bound the number of of blue vertices that can be contained in $R'$.
\begin{enumerate}
\item [0.]
Assume first that \rrgl{rgl_pair} has not been applied on  $v,w$ (that is, $P(v,w) = \emptyset$).
Taking into account that the neighborhoods of these vertices have to be incomparable, Figure~\ref{fig_region-NEW} shows exhaustively the possible configurations that respect planarity, where the darker area corresponds to the subregion $R'$; there can be at most $2$ blue vertices strictly inside $R$.

    \begin{figure}[h!tb]
\begin{center}
   \includegraphics[scale=.9]{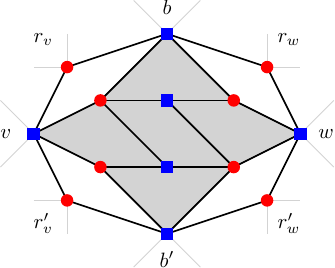}
   \includegraphics[scale=.9]{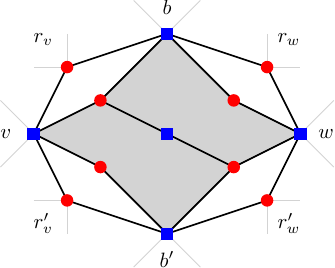}
   \includegraphics[scale=.9]{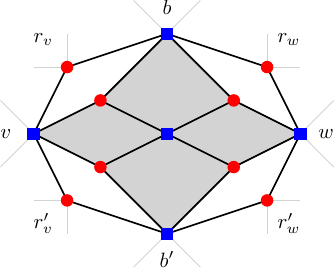}
\end{center}
\vspace{-.4cm}
   \caption{Possible configurations when $|P(v,w)| = 0$ in the proof of Proposition~\ref{prop_nb_incl}.}
   \label{fig_region-NEW}
\end{figure}

\item
Assume now that Case~1 of \rrgl{rgl_pair} has been applied on $v,w$, and let $r',r''$ be the two private neighbors of $v,w$ (that is, $P(v,w) =  \{r',r''\}$).\\
%\ig{Tout le paragraphe qui suit aparait a nouveau dans le Claim 2. Faut-il le supprimer?}\\
%Note that $\mathcal{D} \neq \emptyset$, as otherwise $r',r''$ have degree 1 and \rrgl{rgl_som} has removed $v$ and $w$. Let $d \in\mathcal{D}$. The path $\{v,r',d,r'',w \}$ splits $R$ into two areas (see Figure~\ref{fig_region}(b)): in the first one all non-private red vertices are adjacent to $b$,  while in the second one all non-private red vertices are adjacent to $b'$.

 \begin{claimN}\label{claim:case1nD}
 %The subregion $R'$ contains blue vertices only in $\mathcal{D}\cup\{v,w,b,b'\} $.
 There is no blue vertex from $V_B \setminus \mathcal{D}$ strictly inside $R'$.
    \end{claimN}

    \begin{proof}
    Note that $\mathcal{D} \neq \emptyset$, as otherwise $r',r''$ have degree 1, and \rrgl{rgl_som} has removed $v$ and $w$. Let $d \in\mathcal{D}$. The path $\{v,r',d,r'',w \}$ splits $R$ into two areas (see Figure~\ref{fig_region}(b)): in the first one all non-private red vertices are adjacent to $b$,  while in the second one all non-private red vertices are adjacent to $b'$.

    Assume for contradiction that there is a blue vertex $b \in V(R') \setminus (\mathcal{D}\cup\{v,w,b,b'\})$ in one of areas described above.
    Since $b$ is not adjacent to $r',r''$, and by planarity, its neighborhood would be included in $N(b)$ or in $N(b')$, contradicting the incomparability of neighborhoods.
    %As it is not in $\mathcal{D}\cup\{v,w\}$, it is not adjacent to $r',r''$. Hence, its neighborhood would be included in $N(b)$ or in $N(b')$ contradicting incomparability.
    \end{proof}

    \begin{claimN}\label{claim:case1D}
    %The subregion $R'$ contains at most 2 blue vertices in $\mathcal{D}$..
    There are at most 2 blue vertices in $\mathcal{D}$.
    \end{claimN}

    \begin{proof}
     Assume for contradiction that $|\mathcal{D}| \geq 3$, and let $\{d_1,d_2,d_3\} \subseteq \mathcal{D}$. Observe that $d_i$ is adjacent to $r'$, $r''$, and at least another red vertex (because $N(d_i) \neq N(D_j)$ for $i \neq j$), for $i \in \{1,2,3\}$.
     % We claim that for $i \in \{1,2,3\}$, $N(d_i) \cap (V(R) \setminus \{r',r''\}) \neq \emptyset$. Indeed, assume otherwise without loss of generality that $d_1$ is such that $N(d_1) \cap V(R) = \{r',r''\}$. Then, as by assumption $\{r',r''\} \subseteq N(d_2) \cap V(R)$, vertex  $d_1$ could be removed by Rule~\ref{rgl_bleu}, contradicting the hypothesis that $G$ is reduced.\todo{un peu bcp non}

     Note that the graph $G[V(R) \setminus (\{r',r''\} \cup \mathcal{D})]$ is connected, as all its vertices distinct from $v$ and $w$ are neighbors of at least one of them, and $v$ and $w$ are linked by a path of the boundary of $R$. Let $M$ be the graph obtained from $G[V(R)]$ by contracting the (connected) subgraph $G[V(R) \setminus (\{r',r''\} \cup \mathcal{D})]$ into a single vertex, say $c$. Note that the vertex set of $M$ consists of $r',r'',c$, and the vertices in $\mathcal{D}$, and that by construction $M$ is a minor of $G$. Recall that for $i \in \{1,2,3\}$, $N(d_i) \cap (V(R) \setminus \{r',r''\}) \neq \emptyset$, which is equivalent to saying that for $i \in \{1,2,3\}$ vertex $d_i$ is adjacent to vertex $c$ in $M$. It follows that $M$ contains a subgraph isomorphic to $K_{3,3}$ defined by the bipartition $\{r',r'',c\}$ and $\{d_1,d_2,d_3\}$, which is also a minor of $G$, contradicting by Kuratowski's Theorem~\cite{Die05} the hypothesis that $G$ is a planar graph.
\end{proof}

\item
Assume now that Case~2 of \rrgl{rgl_pair}  has been applied on $v,w$, and let $r$ be the private neighbor of $v,w$ (that is, $P(v,w) =  \{r\}$). %In this case necessarily $N(r) = \{v,w\}$.
Note that the path $\{v,r,w \}$ splits $R$ into two areas; see Figure~\ref{fig_region}(c). Similarly to the argument of Claim~\ref{claim:case1nD} in the case above, it easily follows that there is no other blue vertex from $V_B \setminus \mathcal{D}$ inside any of these 2 areas.

    \begin{claimN}\label{claim:case2D}
    %The subregion $R'$ contains at most 2 blue vertices in $\mathcal{D}$.
    There are at most 2 blue vertices in $\mathcal{D}$.
    \end{claimN}

    \begin{proof} As each vertex $d \in \mathcal{D}$ has incomparable neighborhood with $N(v)$ and $N(w)$, necessarily $N(d)$ contains $r$, a red vertex in $N(v,w) \setminus N(w)$, and another in $N(v,w) \setminus N(v)$. Note that in each of the two areas described above, there is an unique vertex of each type, hence there is a unique vertex from $\mathcal{D}$, in each of the two areas.
    \end{proof}

\item
Assume now that Case~3 of \rrgl{rgl_pair}  has been applied on $v,w$, and let $r'$ be the private neighbor of $v,w$ (that is, $P(v,w) =  \{r'\}$).
Note that $\mathcal{D}$ contains at least one vertex, since $G$ is reduced under \rrgl{rgl_som}. Let $d \in \mathcal{D}$.
Necessarily, $N(d)$ contains $r'$ and another red vertex in $N(v,w) \setminus N(v)$; let $\bar r_w$ be this vertex. The path $\{v,r',d,\bar r_w\}$ splits $R'$ into two areas: one containing $b$ and one containing $b'$. Without lost of generality, we can assume that $\bar r_w$ is adjacent to $b$. According to the arguments in the proof of Claim~\ref{claim:case1nD}, there is no blue vertex from $V_B\setminus \mathcal{D}$ in the area containing $b$, and we can choose $d$ such that this area contains no vertex from $\mathcal{D}$. Hence, it remains to bound the number of vertices in the area containing $b'$. Taking into account that the neighborhoods of these vertices have to be incomparable, the possible configurations that respect planarity can be enumerated exhaustively. These configurations are the same as the ones of Case~0 and Figure~\ref{fig_region-NEW}. It follows that there can be at most two vertices strictly inside this area.

\item
Symmetrically to Case~3.

\end{enumerate}

It follows that a region contains at most $14$ vertices distinct from $v,w$.

\vspace{-.5cm}
\end{proof}

\vspace{.65cm}

We are finally ready to piece everything together and prove Theorem~\ref{th:main}.

\vspace{.35cm}

%\begin{theo} \label{theo_kernel}
%The \domrb problem admits a linear kernel on planar graphs: given planar instance $(G;k)$, there is a polynomial algorithm which returns a kernel $(G';k)$ such that $|G'| \leq 50 k$.
%\end{theo}

\begin{proofof}
Let the input consist of $(G,k)$ where $G$ is a plane graph, and let $(G', k')$ be the corresponding reduced instance. According to Lemmas~\ref{lem_corr_elem}, \ref{lem_corr_1}, and \ref{lem_corr_2}, $G$ admits a \drb with size at most $k$ if and only if $G'$ admits a \drb with size at most $k' \leq k$. It is easy to see that the same time analysis of~\cite{AFN04} implies that our reduction rules can be exhaustively applied in time $O(|V(G)|^3)$. Let $D$ be a \drb of $G'$. Note that $|D| = 0 $ if and only if $G'$ is empty or has only one blue vertex, that is, $G'$ has constant size. Moreover, $|D| \neq 1$, since the unique dominating vertex should have been removed by \rrgl{rgl_som}. Also, $|D| \neq 2$, since the pair of dominating vertices should have been removed by \rrgl{rgl_pair}. Therefore, we may assume that $|D| \geq 3$, and then, according to Propositions \ref{prop_nb_reg}, \ref{prop_nb_excl}, and \ref{prop_nb_incl}, if $G'$ admits a \drb with size at most $k$, then $G'$ has order at most $14 \cdot (3k-6) + k \leq 43 k$. \end{proofof}

\section{Conclusion}
\label{sec:concl}

We have presented an explicit linear kernel for the \textsc{Planar Red-Blue Dominating Set} problem of size at most $43k$. A natural direction for further research is to improve the constant and the running time of our kernelization algorithm (we did not focus on optimizing the latter in this work), as well as proving lower bounds on the size of the kernel. It would also be interesting to extend our result to larger classes of sparse graphs. In particular, does \textsc{Red-Blue Dominating Set} fit into the recent framework introduced in~\cite{GPST13} for obtaining explicit and constructive linear kernels on sparse graph classes via dynamic programming?

A first step in this direction is a bikernel in the class of $H$-topological-minor-free graphs, which can be easily derived from the linear kernel for \dom in $H'$-topological-minor-free proved by Fomin \emph{et al.}~\cite{FLST13} combined with the following reduction from \RBDS to \dom proposed by an anonymous referee. Given an \RBDS instance $(G = (V_B \cup V_R,E), k)$, create a \dom instance $(G',k+1)$, where $G'$ is obtained from $G$ by adding a new vertex $u$ that is adjacent to all blue vertices, and to another new vertex $u'$ of degree 1. Given a \rbds $D$ of $G$, $D \cup\{u\}$ is a dominating set of $G'$. Conversely, given an optimal dominating set $D'$ of $G'$, the vertex $u'$ ensures that $u \in D$, thereby dominating all blue vertices. Hence to dominate $G'$ it suffices to dominate the red vertices (note that $D'$ does not contain red vertices because they only dominate themselves and blue vertices). The minor $H'$ is obtained from $H$ by adding a universal vertex. Such a bikernel is linear, but involves a large multiplicative constant depending on the excluded topological minor.
%\todo{est que expention borné ? est ce que constructif. }

\vspace{.5cm}
\noindent \textbf{Acknowledgement}. We would like to thank the anonymous referees for helpful and thorough remarks that improved the presentation of the manuscript, and which allowed us to slightly improve the constant of our kernel. We also thank them for pointing out several imprecise steps in some of the proofs given in~\cite{AFN04} and for providing us helpful hints to fix them.

{\small
\bibliographystyle{abbrv}
%\bibliography{linearKernels}

}
\end{document}